\theoremstyle{plain}
\newtheorem{proposition}{Proposition}
\newtheorem{lemma}[proposition]{Lemma}
\theoremstyle{definition}
\theoremstyle{remark}
\newtheorem{remark}[proposition]{Remark}
\newcommand{\rhs}{r.h.s.\ }
\newcommand{\lhs}{l.h.s.\ }
\newcommand{\wrt}{w.r.t.\ }
\newcommand{\cf}{cf.\ }
\newcommand{\ud}{\mathrm{d}}
\newcommand{\del}{\partial}
\DeclareMathOperator{\supp}{supp}
\DeclareMathOperator{\dom}{dom}
\newcommand{\betrag}[1]{{\lvert #1 \rvert}}
\newcommand{\norm}[1]{{\lVert #1 \rVert}}
\newcommand{\R}{\mathbb{R}}
\newcommand{\C}{\mathbb{C}}
\newcommand{\Z}{\mathbb{Z}}
\newcommand{\D}{\mathfrak{D}}
\newcommand{\skal}[2]{\langle #1 , #2 \rangle}
\newcommand{\order}{\mathcal{O}}
\newcommand{\HS}{\mathcal{H}}
\newcommand{\sS}{\mathcal{S}}
\newcommand{\sK}{\mathcal{K}}
\newcommand{\sF}{\mathcal{F}}
\newcommand{\sD}{\mathcal{D}}
\newcommand{\eps}{\varepsilon}
\newcommand{\N}{\mathbb{N}}
\newcommand{\bk}{{\mathrm{bulk}}}
\newcommand{\bd}{{\mathrm{boundary}}}
\newcommand{\vp}{{\varphi}}
\newcommand{\beq}{\begin{equation}}
\newcommand{\eeq}{\end{equation}}
\begin{document}


\title{Generalized Wentzell boundary conditions and quantum field theory}
\author{Jochen Zahn \\ Institut f\"ur Theoretische Physik, Universit\"at Leipzig \\ Br\"uderstr.~16, 04103 Leipzig, Germany \\ jochen.zahn@itp.uni-leipzig.de}

\date{\today}

\date{\today}


\maketitle

\begin{abstract}
We discuss a free scalar field subject to generalized Wentzell boundary conditions. On the classical level, we prove well-posedness of the Cauchy problem and in particular causality. Upon quantization, we obtain a field that may naturally be restricted to the boundary. We discuss the holographic relation between this boundary field and the bulk field.
\end{abstract}

\section{Introduction}

Holography has been a main theme in theoretical high energy physics and quantum gravity in the last two decades \cite{tHooftHolography, SusskindHolography, MaldacenaHolography, WittenHolography}. Inspired by the gauge/gravity duality, studies of holographic aspects were often considering $d+1$ dimensional Anti deSitter space (AdS) and its (conformal) boundary, $d$ dimensional Minkowski space. However, holography is a generic aspect of quantum field theory on space-times with time-like boundaries, raising the question which of the properties of holography on AdS are generic, and which ones are special to AdS.

By holography we here mean an isomorphism of the algebras of bulk and boundary observables. For AdS such an isomorphism was abstractly constructed in \cite{RehrenHolography}, using the fact that wedges in AdS have double cones as boundaries: one identifies the algebras of observables localized in wedges of AdS with the algebras of observables localized in the corresponding doubles cones on the boundary. One can also proceed via Wightman functions, and define the Wightman function of the boundary theory to coincide with appropriately scaled limits of bulk Wightman functions \cite{BertolaBrosMoschellaSchaeffer00}. This amounts to defining the boundary field as an appropriately scaled limit of the bulk field, \cf also \cite{DutschRehrenDualField} for the relation between the boundary value of the bulk field and the ``dual field'' in the original formulation of the AdS/CFT correspondence. Let us list some of the properties of holography on AdS:
\begin{itemize}
\item The boundary theory is a conformal field theory. This is a generic feature on AdS \cite{RehrenHolography, BertolaBrosMoschellaSchaeffer00}, independent of the concrete choice of the bulk fields.
\item The correspondence maps bulk observables localized in a compact region to boundary observables localized in compact regions.
\item For a bulk theory with local observables, the boundary theory will not fulfill the time-slice axiom \cite{RehrenHolography}.
\item The boundary conformal field in general has a positive anomalous dimension. The basic example is the massive scalar field \cite{WittenHolography}, where the anomalous dimension $((\frac{d}{2})^2 + \mu^2)^{\frac{1}{2}} + 1$ of the dual field is strictly positive.
\end{itemize}

In the following, we study the holographic relation between a massive scalar field on $d+1$ dimensional Minkowski space with $d$ dimensional time-like boundaries. Not surprisingly, we find that the boundary field theory is not conformal, and that the bulk observables localized in compact space-time regions are mapped to boundary observables that are delocalized. The first two properties in the above list thus seems to be specific to AdS. Also in our setting, the time-slice axiom does not hold for the boundary theory.

Regarding the last point in the above list, it seems obvious that the boundary field, being the boundary limit of the bulk field, inherits its short-distance behavior. Hence, for a scalar field one would expect a short-distance singularity $(x - y)^{- (d-1)}$ for the two-point function. However, it turns out that there are boundary conditions ensuring that the boundary two-point function has short-distance singularity $(x - y)^{- (d-2)}$, as one expects for a scalar field in $d$ space-time dimensions. These are so-called generalized Wentzell boundary conditions.
These are of interest in their own right, and the first part of this work will be devoted to their study.

Concretely, we consider a free scalar field on the $(d+1)$-dimensional space-time $M = \R \times \Sigma$, 
with the spatial slices $\Sigma$ having a boundary $\del \Sigma$. The two main examples will be the half-space $\Sigma = \R^d_+ =\R^{d-1} \times [0, \infty)$ and the strip $\Sigma = \R^{d-1} \times [-S, S]$.
We do not impose boundary conditions by hand, but supplement the bulk action with an action for the boundary, which is of the same form. Concretely,
\begin{align}
 \mathcal{S} & = \mathcal{S}_\bk + \mathcal{S}_\bd \nonumber \\
\label{eq:action}
 & = - \frac{1}{2} \int_{M} g^{\mu \nu} \del_\mu \phi \del_\nu \phi + \mu^2 \phi^2 - \frac{c}{2} \int_{\del M} h^{\alpha \beta} \del_\alpha \phi \del_\beta \phi + \mu^2 \phi^2
\end{align}
where $g$ is the Minkowski metric on the bulk and $h$ is the induced metric on the boundary. $c$ is a positive constant with the dimension of a length. The simplest case where actions of this type appear may be the open Nambu-Goto string with masses at the ends \cite{ChodosThorn74}, i.e.,
\[
 \mathcal{S} = - \gamma \int_\Xi \sqrt{\betrag{g}} - m \int_{\del \Xi} \sqrt{\betrag{h}},
\]
where $\Xi$ is the world-sheet, $g$ the induced metric in the bulk and $h$ the induced metric on the boundary. In fact, for rotating string solutions to this action, the quadratic part of the action for fluctuations normal to the plane of rotation are exactly of the form \eqref{eq:action} with $d=1$, $\mu = 0$ and $\Sigma = [-S, S]$, \cf \cite{RotatingString}. Higher-dimensional examples can be straightforwardly obtained by generalizing the Nambu-Goto action to higher dimensional objects, i.e., branes.

Interestingly, actions of the above type were also considered in the context of the AdS/CFT correspondence. In treatments of the gauge/gravity duality, one supplements the bulk Einstein-Hilbert action with cosmological constant with a counterterm boundary Einstein-Hilbert action with cosmological constant \cite{BalasubramanianKraus99}. The role of the length scale $c$ is there played by the radius of curvature of the AdS space-time. Note, however, that the counterterm action is fixed by the requirement of obtaining a finite stress-energy tensor by variations of the metric on the boundary, whereas in \eqref{eq:action} $c$ can be chosen arbitrarily. Also note that the analogue of $c$ is negative for the gravitational counterterms. In holographic renormalization \cite{Skenderis02}, also boundary counterterms for scalar fields are introduced, of the same form as in \eqref{eq:action}. Still, $c$ is negative, but it is no longer simply given by the radius of curvature. As we will see, negative $c$ leads to severe difficulties already in the treatment of the classical system.

Another instance where boundary terms of the above form arise naturally is in models with a non-renormalizable bulk action. For a boundary of co-dimension $k$, the power-counting degree of divergence of terms localized at the boundary is reduced by $k$. This was discussed in \cite{Symanzik81} in the context of a scalar field with Dirichlet boundary conditions. Hence, in renormalizable theories, where the field strength counterterm has at most a logarithmic divergence, no kinetic counterterm is necessary at the boundary. This, however, changes in non-renormalizable theories, where the occurrence of a kinetic term at the boundary is thus generic.

Variation of the action \eqref{eq:action} yields, upon integration by parts, the equations of motion
\begin{align}
\label{eq:eom_bk}
 - \Box_g \phi + \mu^2 \phi & = 0 & \text{ in } M, \\
\label{eq:eom_bd}
 - \Box_h \phi + \mu^2 \phi & = c^{-1} \del_\perp \phi & \text{ in } \del M.
\end{align}
Here $\del_\perp$ denotes the inward pointing normal derivative. Using \eqref{eq:eom_bk}, one may write \eqref{eq:eom_bd} alternatively as
\begin{align}
\label{eq:bdyCondition}
 \del_\perp^2 \phi & = c^{-1} \del_\perp \phi & \text{ in } \del M.
\end{align}
These equations may be read as an equation of motion for the bulk \eqref{eq:eom_bk}, supplemented by boundary conditions \eqref{eq:eom_bd} or equivalently \eqref{eq:bdyCondition}. Boundary conditions of this type are known as generalized Wentzell or Feller-Wentzell type boundary conditions \cite{Ueno73,FaviniGoldstein2Romanelli02}.\footnote{I am grateful to Konstantin Pankrashkin for pointing out to me the mathematical literature on the subject. Note, however, that in the context of the wave equation, the generalized Wentzell boundary conditions considered in the mathematical literature are usually slightly different, with $\Box_h$ on the \lhs of \eqref{eq:eom_bd} replaced by $\del_0^2$ \cite{GalGoldstein203}, i.e., the terminology used here does only coincide with the one used in some of the literature in the case $d = 1$. An example where generalized Wentzell boundary conditions are formulated so general that the case \eqref{eq:eom_bd} is also covered for $d > 1$ is \cite{CocliteFaviniGoldstein2Romanelli14}. The case studied here, but with $\mu = 0$ and additional damping or source terms, was studied in \cite{Vitillaro13, Vitillaro15}, where these boundary conditions were called ``kinetic'' or ``dynamical''.}

It turns out that it is physically more appropriate to consider \eqref{eq:eom_bk} and \eqref{eq:eom_bd} as two wave equations, which are coupled by the solution of the bulk equation \eqref{eq:eom_bk} providing a source on the \rhs of the boundary equation \eqref{eq:eom_bd} and the solution of the boundary equation \eqref{eq:eom_bd} providing a Dirichlet type boundary condition for the bulk equation \eqref{eq:eom_bk}. This is reminiscent of the fact that in the AdS/CFT correspondence the boundary values of the bulk fields act as sources for the boundary field.
Mathematically, this interpretation is implemented by considering Cauchy data in $(L^2(\Sigma) \oplus L^2(\del \Sigma))^2$, or related Sobolev spaces. The fact that the Hilbert space $L^2(\Sigma) \oplus L^2(\del \Sigma)$ is appropriate for Wentzell boundary conditions was already noted in \cite{Feller57,Ueno73,FaviniGoldstein2Romanelli02,GalGoldstein203,Vitillaro13}. Physically, this means that the boundary can carry energy.
We will prove well-posedness of the Cauchy problem and causal propagation. This may be interesting in itself, as it contradicts the folklore wisdom that Robin boundary conditions are the most general sensible linear boundary conditions for the wave equation.

As a by-product of the discussion of the wave equation \eqref{eq:eom_bk}, \eqref{eq:eom_bd}, we obtain a (generalized) orthonormal basis of solutions. This can be used for a canonical quantization of the system.
The unusual space of Cauchy data then has interesting physical implications. In particular, it turns out that it is possible to restrict the  field to the boundary in a natural way, resulting on a dimensionally reduced field, which, interestingly, has the short-distance behavior of a free scalar field theory on $\del M$. Furthermore, bulk observables can be holographically mapped to boundary observables. This is possible as the boundary field is a generalized free field \cite{Greenberg}.
 Heuristically, the transversal degree of freedom that is lost by the restriction to the boundary is traded for the possibility to excite these higher modes. 
 
The classical aspects of the field equations \eqref{eq:eom_bk}, \eqref{eq:eom_bd} are discussed in the next section.
In Section~\ref{sec:Quantization}, the quantization of the system is performed and the holographic aspects are discussed. In Section~\ref{sec:Comments}, we comment on the relation to other boundary conditions and holography on AdS.

\subsection*{Notation and conventions}

We use signature $(-, +, \dots, +)$. The coordinates on $\Sigma = \R^{d-1} \times \R_+$ or $\R^{d-1} \times [-S, S]$ are usually denoted $(x, z)$. For $x \in \R^{1, d-1}$, the spatial part is denoted by $\underline{x}$. In the case $\Sigma = \R^{d-1} \times [-S, S]$, $\del_\pm \Sigma$ denotes the component at $\pm S$, and analogously for $\del_\pm M$. Fourier transformation is defined as
\[
 \hat f(k) = (2 \pi)^{- \frac{d}{2}} \int f(x) e^{- i k x} \ud^d x.
\]
The sign of the Laplacian is defined as $\Delta = \del_i \del_i$. For a set $\Sigma \subset M$, we denote by $D^+(\Sigma)$ its future domain of dependence, i.e., the set of all points $p$ such that all past inextendable causal piecewise $C^1$ curves through $p$ intersect $\Sigma$. These curves may also be part of the boundary. As usual, $H^s(\R^d)$ and $\sS(\R^d)$ denote Sobolev and Schwartz spaces, the latter being the space of rapidly decreasing smooth functions.

\section{The wave equation}
\label{sec:WaveEquation}

Let us study the wave equation \eqref{eq:eom_bk}, \eqref{eq:eom_bd}. It is straightforward to check that the symplectic form 
\begin{equation}
\label{eq:SymplecticForm}
 \sigma((\phi, \dot \phi), (\psi, \dot \psi)) = \int_{\Sigma} \phi \dot \psi - \dot \phi \psi + c \int_{\del \Sigma} \phi \dot \psi - \dot \phi \psi
\end{equation}
is conserved for solutions $\phi$, $\psi$. It is thus natural to introduce the scalar product
\[
 \skal{\phi}{\psi} = \int_{\Sigma} \bar \phi \psi + c \int_{\del \Sigma} \bar \phi \psi
\]
on functions on $\Sigma$. Completion in the corresponding norm yields the Hilbert space $H = L^2(\Sigma, \varrho + c \delta_{\del \Sigma})$, where $\varrho$ is the Lebesgue measure, and $\delta_{\del \Sigma}$ the Dirac measure on the boundary. Under completion of the space of continuous functions, bounded functions that are localized at the boundary $\del \Sigma$ are not equivalent to zero. Hence, $H$ is canonically isomorphic to $L^2(\Sigma) \oplus L^2(\del \Sigma)$. An element of $H$ is typically written as $\Phi = (\phi, \phi|)$ (but note that $\phi|$ need not coincide with the boundary value $\phi|_{\del \Sigma}$ of $\phi$, even if it is well-defined), with the scalar product
\begin{equation}
\label{eq:ScalarProduct}
 \skal{\Phi}{\Psi} = \skal{\phi}{\psi}_{L^2(\Sigma)} + c \skal{\phi|}{\psi|}_{L^2(\del \Sigma)}.
\end{equation}
On $H$, the equation of motion \eqref{eq:eom_bk}, \eqref{eq:eom_bd} can be written as
\[
 - \del_t^2 \Phi = \Delta \Phi,
\]
with
\[
 \Delta = \begin{pmatrix} - \Delta_\Sigma + \mu^2 & 0 \\ - c^{-1} \del_\perp \cdot | & - \Delta_{\del \Sigma} + \mu^2 \end{pmatrix},
\]
where we encode the boundary condition in the domain
\beq
\label{eq:DeltaDomain}
 D = \left\{ (\phi, \phi|) \in H \mid \phi \in H^2(\Sigma), \phi| \in H^2(\del \Sigma), \phi|_{\del \Sigma} = \phi| \right\}.
\eeq
Note that the Sobolev space $H^s(\Sigma)$ is given by \cite[Section~4.5]{TaylorPDEsI}
\[
 H^s(\Sigma) = \frac{H^s(\R^d)}{ \left\{ u \in H^s(\R^d) \mid \supp u \subset \overline{ \R^d \setminus \Sigma } \right\} }.
\]
Also note that for $s > \frac{1}{2}$, the restriction map $H^s(\Sigma) \to H^{s-\frac{1}{2}}(\del \Sigma)$ is well-defined and continuous \cite[Prop.~4.4.5]{TaylorPDEsI}, so that the boundary condition makes sense.

In the following, we consider the case of the half-space $\Sigma = \R_+^{d}$.
 The case $\Sigma = \R^{d-1} \times [-S, S]$ is discussed at the end of the section.
\begin{proposition}
\label{prop:SelfAdjoint}
With $\Sigma = \R^d_+$ and on the domain $D$, $\Delta$ is self-adjoint with spectrum contained in $[\mu^2, \infty)$. A normalized\footnote{Normalization is understood in the distributional sense, i.e., $\skal{\Phi_p}{\Phi_{p'}} = \delta(p-p')$.} complete system of generalized eigenfunctions of $\Delta$ is given by
\beq
\label{eq:Phi_k_q}
 \Phi_{k, q} = \left( (2 \pi)^{d-1} \tfrac{\pi}{2} \left( c^2 q^2 + 1 \right) \right)^{-\frac{1}{2}}  \left( e^{i k x} (\cos q z - c q \sin q z), e^{i k x} \right),
\eeq
with $k \in \R^{d-1}$, $q \in \R_+$ and the eigenvalue
\beq
\label{eq:omega_k_q}
 \omega^2_{k, q} = k^2 + q^2 + \mu^2.
\eeq
\end{proposition}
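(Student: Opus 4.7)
The plan is to Fourier transform in the tangential direction $x \in \R^{d-1}$, reducing $\Delta$ to a direct integral of fibre operators $\Delta_k$ on $H_k = L^2(\R_+) \oplus \C$ (with the $c$-weighted scalar product inherited from $H$), and then to establish self-adjointness of each $\Delta_k$ via the quadratic form method. Integration by parts using the domain constraint $\phi(0) = A$ shows
\[
 \skal{\Phi}{\Delta_k \Phi}_{H_k} = \|\phi'\|^2_{L^2(\R_+)} + (k^2+\mu^2)\|\Phi\|^2_{H_k},
\]
so $\Delta_k$ is symmetric and bounded below by $k^2+\mu^2$. Closing this form on $V_k = \{(\phi, A): \phi \in H^1(\R_+),\, \phi(0) = A\}$ (the trace is well-defined by $H^1(\R_+) \hookrightarrow C^0(\overline{\R_+})$) and then identifying its Friedrichs extension with $\Delta_k$ on $D_k$ by a weak-equals-strong argument yields self-adjointness; the form bound simultaneously gives $\sigma(\Delta_k) \subset [k^2+\mu^2, \infty)$, hence $\sigma(\Delta) \subset [\mu^2, \infty)$ after integrating over $k$.

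Next I would verify directly that $\Phi_{k,q}$ satisfies the eigenvalue equation with eigenvalue $\omega^2_{k,q}$: the bulk row follows from $(-\del_z^2)(\cos qz - cq\sin qz) = q^2(\cos qz - cq\sin qz)$ combined with $-\Delta_{\R^{d-1}} e^{ikx} = k^2 e^{ikx}$, while the boundary row evaluates to $-c^{-1}(-cq^2) + (k^2+\mu^2) = k^2 + q^2 + \mu^2 = \omega^2_{k,q}$. The generalised orthonormality factorises: the $x$-integration contributes $(2\pi)^{d-1}\delta^{d-1}(k-k')$, and the remaining $z$- plus boundary-contribution is
\[
 \int_0^\infty (\cos qz - cq\sin qz)(\cos q'z - cq'\sin q'z)\,dz + c.
\]
Using the distributional identities $\int_0^\infty \cos\alpha z \cos\beta z\,dz = \tfrac{\pi}{2}\delta(\alpha-\beta)$ and the analogous $\sin\sin$ version (for $\alpha,\beta > 0$), together with $\int_0^\infty \sin\alpha z\,dz = 1/\alpha$ in the principal-value sense, the two $\cos\sin$ cross terms contribute $(-cq'^2 + cq^2)/(q'^2-q^2) = -c$, precisely cancelling the boundary term, while the $\cos\cos$ and $\sin\sin$ contributions combine to $\tfrac{\pi}{2}(1+c^2q^2)\delta(q-q')$; this matches the normalisation in \eqref{eq:Phi_k_q}.

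The main obstacle will be completeness, i.e.\ showing that the $\Phi_{k,q}$ exhaust the spectral resolution rather than spanning only an isometric subspace of $H$. I would first rule out discrete spectrum of $\Delta_k$: for $\lambda < k^2+\mu^2$, the only $L^2$-solution of the bulk ODE is $\phi \propto e^{-\kappa z}$ with $\kappa = \sqrt{k^2+\mu^2-\lambda}$, but the boundary condition then demands $-\kappa = c\kappa^2$, which is impossible for $c, \kappa > 0$; thus $\Delta_k$ has purely continuous spectrum on $[k^2+\mu^2, \infty)$. Completeness on each fibre then follows from Stone's formula applied to the explicit resolvent $(\Delta_k - \lambda)^{-1}$, constructed from the decaying solution at infinity and the solution realising the boundary condition at $z=0$; the imaginary part of the corresponding Weyl--Titchmarsh $m$-function as $\lambda$ descends onto the continuous spectrum reproduces the spectral density $\tfrac{2}{\pi(1+c^2q^2)}$ implicit in the normalisation of $\Phi_{k,q}$, yielding the Plancherel identity.
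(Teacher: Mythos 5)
Your proposal is correct, and it reaches the conclusion by a genuinely different and in places more complete route than the paper. The paper works directly on the half-space: it integrates $\skal{\Phi}{\Delta\Psi}$ by parts to show both symmetry on $D$ and that $\dom\Delta^*$ is forced into $D$ (the matching condition $\phi|=\phi|_{\del\Sigma}$ plus $H^2$ regularity), and then obtains the eigenfunctions from the ansatz $e^{ikx}(A\cos qz+B\sin qz)$, with the boundary row of the eigenvalue equation yielding $B=-cqA$; the spectrum, normalization and completeness are then asserted rather than computed. You instead fibre over the tangential momentum and use the quadratic form $\|\phi'\|^2+(k^2+\mu^2)\|\Phi\|^2_{H_k}$ on $\{(\phi,\phi(0)):\phi\in H^1(\R_+)\}$ -- note that this form identity is exactly the paper's integration-by-parts computation in disguise, the $\bar\phi(0)\phi'(0)$ bulk boundary term cancelling against the $-\bar A\phi'(0)$ contribution from the weighted boundary inner product -- and the Friedrichs extension gives self-adjointness and the lower bound $\mu^2$ simultaneously. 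What your approach buys is precisely what the paper leaves implicit: the explicit distributional orthonormality computation (the cross terms $(-cq'^2+cq^2)/(q'^2-q^2)=-c$ cancelling the boundary weight $+c$ is the correct and essential cancellation), the exclusion of bound states ($-1=c\kappa$ being impossible for $c,\kappa>0$), and an actual completeness argument via Stone's formula for the fibre resolvent, which is needed to justify calling $\{\Phi_{k,q}\}$ a generalized orthonormal \emph{basis} and to conclude that the spectrum is contained in (and not merely contains) $[\mu^2,\infty)$. The only point to watch is that the Weyl--Titchmarsh step must be carried out for the linearized operator on $L^2(\R_+)\oplus\C$ with its eigenvalue-dependent boundary condition rather than for a standard Sturm--Liouville problem, but this is routine once the resolvent is written down, and the resulting spectral density $\tfrac{2}{\pi(1+c^2q^2)}$ is consistent with the normalization in \eqref{eq:Phi_k_q} and with the boundary two-point function derived later in the paper.
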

\begin{proof}
We compute (for simplicity, we here set $\mu = 0$)
\begin{align*}
 \skal{\Phi}{\Delta \Psi} & = - \int_\Sigma \bar \phi \Delta_\Sigma \psi - \int_{\del \Sigma} \bar \phi| \del_\perp \psi|_{\del \Sigma} - c \bar \phi| \Delta_{\del \Sigma} \psi| \\
 & = \int_\Sigma \del_i \bar \phi \del_i \psi + \int_{\del \Sigma} \bar \phi|_{\del \Sigma} \del_\perp \psi|_{\del \Sigma} - \bar \phi| \del_\perp \psi|_{\del \Sigma} - c \Delta_{\del \Sigma} \bar \phi| \psi|  \\
 & = - \int_\Sigma \Delta_\Sigma \bar \phi \psi - \int_{\del \Sigma} \del_\perp \bar \phi|_{\del \Sigma} \psi|_{\del \Sigma} - \bar \phi|_{\del \Sigma} \del_\perp \psi|_{\del \Sigma} + \bar \phi| \del_\perp \psi|_{\del \Sigma} + c \Delta_{\del \Sigma} \bar \phi| \psi|,
\end{align*}
which for $\psi| = \psi|_{\del \Sigma}$ equals $\skal{\Delta \Phi}{\Psi}$ iff $\phi| = \phi|_{\del \Sigma}$. In particular, it follows that the domain of $\Delta^*$ is contained in $\left\{ (\phi, \phi|) \mid \phi|_{\del \Sigma} = \phi| \right\}$. But for $\Delta^*$ to be well-defined on $(\phi, \phi|)$, we also need $\phi \in H^2(\Sigma)$, $\phi| \in H^2(\del \Sigma)$.
For the claim on the spectrum, we compute, for $\Psi \in D$,
\[
 \skal{\Psi}{\Delta \Psi} = \int_\Sigma \del_i \bar \psi \del_i \psi + c \int_{\del \Sigma} \del_a \bar \psi| \del_a \psi| \geq 0,
\]
where the index $a$ runs over the coordinates on $\del \Sigma$.
A separation ansatz for the generalized eigenfunctions of $- \Delta_\Sigma$ is
\begin{align}
\label{eq:phi_1bdy}
 \phi(x, z) & = e^{i k x} \left( A \cos q z + B \sin q z \right),
\end{align}
where we may assume that $k \in \R^{d-1}$ is real. Obviously,
\[
 \begin{pmatrix} - \Delta_\Sigma + \mu^2 & 0 \\ - c^{-1} \del_\perp \cdot |_{\del \Sigma} & - \Delta_{\del \Sigma} + \mu^2 \end{pmatrix} \begin{pmatrix} \phi \\ \phi|_{\del \Sigma} \end{pmatrix} = \omega^2 \begin{pmatrix} \phi \\ \phi|_{\del \Sigma} \end{pmatrix}
\]
with $\omega^2 = k^2 + q^2 + \mu^2$ implies
\[
 B = - c q A.
\]
From this, the generalized basis \eqref{eq:Phi_k_q} follows by normalization.
\end{proof}

In order to discuss the regularity of the solutions, it is advantageous to restrict to $\mu > 0$ and to define the admissible spaces of Cauchy data as
\[
 \sK_r = \sD_{r+1} \oplus \sD_r,
\]
where for $r \in \N_0$,
\[
 \sD_r = \dom \Delta^\frac{r}{2}
\]
is a Hilbert space with inner product
\beq
\label{eq:skal_D_r}
 \skal{ \cdot}{\cdot}_{\sD_r} = \skal{\Delta^{\frac{r}{2}} \cdot}{\Delta^{\frac{r}{2}} \cdot}.
\eeq
For $r \in \Z, r< 0$, one defines $\sD_r$ as the completion of $H$ \wrt the inner product \eqref{eq:skal_D_r}. In particular, $\sD_r^* = \sD_{-r}$. Obviously,
\[ 
 \Delta \sD_r = \sD_{r-2}.
\]
The equation of motion can on $\sK_r$ now be written as
\[
 i \del_t \begin{pmatrix} \Phi_0 \\ \Phi_1 \end{pmatrix} = \begin{pmatrix} 0 & 1 \\ \Delta & 0 \end{pmatrix} \begin{pmatrix} \Phi_0 \\ \Phi_1 \end{pmatrix} = A \begin{pmatrix} \Phi_0 \\ \Phi_1 \end{pmatrix}.
\]
The operator $A$ is self-adjoint on $\sK_r$ with domain
\[
 \dom(A) = \dom \Delta^{\frac{r+2}{2}} \oplus \dom \Delta^{\frac{r+1}{2}}.
\]
In particular, the equation of motion is solved by
\[
 \begin{pmatrix} \Phi_0(t) \\ \Phi_1(t) \end{pmatrix} = e^{- i A t} \begin{pmatrix} \Phi_0(0) \\ \Phi_1(0) \end{pmatrix}.
\]
The time-evolution leaves the domain invariant.

It is instructive to determine the domain of $\Delta^2$. Obviously,
\[
 \dom(\Delta^2) = \left\{ \Phi \in \dom(\Delta) \mid \Delta \Phi \in \dom(\Delta) \right\}.
\]
In order to fulfill $\Delta \Phi \in \dom(\Delta)$, we certainly have to require that $\phi \in H^4(\Sigma)$, $\phi| \in H^4(\del \Sigma)$. However, we also have to ensure that for $\Delta \Phi$ the boundary value of the bulk field coincides with the boundary field. This yields
\begin{multline*}
 \dom(\Delta^2) \\ = \left\{ (\phi, \phi|) \mid \phi \in H^4(\Sigma), \phi| \in H^4(\del \Sigma), \phi|_{\del \Sigma} = \phi|, \del_\perp^2 \phi|_{\del \Sigma} = c^{-1} \del_\perp \phi|_{\del \Sigma} \right\},
\end{multline*}
i.e., the bulk field has to fulfill the boundary condition \eqref{eq:bdyCondition}. Analogously, one finds
\begin{multline*}
 \dom(\Delta^k) = \left\{ (\phi, \phi|) \mid \phi \in H^{2k}(\Sigma), \phi| \in H^{2k}(\del \Sigma), \phi|_{\del \Sigma} = \phi|, \right. \\
 \left. \del_\perp^{2j-2} \phi|_{\del \Sigma} = c^{-1} \del_\perp^{2j-3} \phi|_{\del \Sigma} \ \forall j \leq k \right\}.
\end{multline*}

Now consider Cauchy data $\Psi = (\Phi_0, \Phi_1) \in \sK_\infty$, with
\[
 \sK_\infty = \cap_{r} \sK_r.
\]
For time derivatives of the corresponding solutions $\Psi(t) = e^{- i A t} \Psi$, we compute
\[
 \norm{\del_t^m  \Psi(t)}^2_{\sK_s} = \norm{ A^m \Psi(t) }^2_{\sK_s} = \norm{\Psi(t)}^2_{\sK_{s+m}} = \norm{\Psi(0)}^2_{\sK_{s+m}},
\]
where the last equality follows from the unitarity of the time evolution. We can thus bound arbitrary derivatives on future Cauchy surfaces by the initial data. It is instructive to compute
\begin{align*}
 \norm{\Phi}^2_{\sD_1} & = \skal{\Phi}{\Delta \Phi} \\
 & = \skal{\phi}{( - \Delta_\Sigma + \mu^2 ) \phi}_{L^2(\Sigma)} + c \skal{\phi|}{( - \Delta_{\del \Sigma} + \mu^2 ) \phi|}_{L^2(\del \Sigma)} - \skal{\phi|}{\del_\perp \phi}_{L^2(\del \Sigma)} \\
 & = \norm{ \nabla_\Sigma \phi}^2_{L^2(\Sigma)} + \mu^2 \norm{\phi}^2_{L^2(\Sigma)} + c \norm{ \nabla_{\del \Sigma} \phi|}^2_{L^2(\del \Sigma)} + c \mu^2 \norm{\phi|}^2_{L^2(\del \Sigma)},
\end{align*}
where we assumed that $\phi|_{\del \Sigma} = \phi|$. Using $\norm{\Phi}^2_{\sD_k} = \norm{\Delta \Phi}^2_{\sD_{k-2}}$, we obtain, for $k$ odd and $\Phi \in \dom(\Delta^{\frac{k+1}{2}})$,
\begin{multline}
\label{eq:norm_D_k_odd} 
\norm{\Phi}^2_{\sD_k} = \norm{ \nabla_\Sigma (- \Delta_\Sigma + \mu^2)^{\frac{k-1}{2}} \phi}^2_{L^2(\Sigma)} + \mu^2 \norm{ (- \Delta_\Sigma + \mu^2)^{\frac{k-1}{2}} \phi}^2_{L^2(\Sigma)} \\
+ c \norm{ \nabla_{\del \Sigma} (- \Delta_\Sigma + \mu^2)^{\frac{k-1}{2}} \phi|_{\del \Sigma}}^2_{L^2(\del \Sigma)} + c \mu^2 \norm{ (- \Delta_\Sigma + \mu^2)^{\frac{k-1}{2}} \phi|_{\del \Sigma}}^2_{L^2(\del \Sigma)}.
\end{multline}
For even $k$ and $\Phi \in \dom(\Delta^{\frac{k}{2}+1})$, we analogously obtain
\begin{equation}
\label{eq:norm_D_k_even}
\norm{\Phi}^2_{\sD_k} =
 \norm{(-\Delta_\Sigma + \mu^2)^{\frac{k}{2}} \phi}^2_{L^2(\Sigma)} + c \norm{(-\Delta_\Sigma + \mu^2)^{\frac{k}{2}} \phi|_{\del \Sigma}}^2_{L^2(\del \Sigma)}.
\end{equation}
Denoting by $H^\infty( \cdot) = \cap_r H^r( \cdot )$ the intersection of Sobolev spaces, we have thus shown:

\begin{proposition}
For smooth Cauchy data
\[
 (\phi_0, \phi_1) \in H^\infty(\R^d_+) \times H^\infty(\R^d_+)
\]
such that
\begin{equation}
\label{eq:BoundaryConditionInitialData}
 \del_\perp^{2 k+2} \phi_i|_{\del \Sigma} = c^{-1} \del_\perp^{2 k + 1} \phi_i|_{\del \Sigma}, \qquad \forall k \in \N,
\end{equation}
for $i = 0, 1$, there is a unique smooth solution $\phi(t)$ to the wave equation \eqref{eq:eom_bk}, \eqref{eq:eom_bd} with $\mu > 0$. The properties of the Cauchy data are conserved under time evolution. Furthermore, denoting $\Phi(t) = (\phi(t), \phi(t)|_{\del \Sigma})$, we have
\begin{equation}
\label{eq:CauchyBoundGlobal}
 \norm{\del_t^m \Phi(t)}^2_{\sD_{k+1}} + \norm{\del_t^{m+1} \Phi(t)}^2_{\sD_k} = \norm{\Phi_0}^2_{\sD_{k+m+1}} + \norm{\Phi_1}^2_{\sD_{k+m}}.
\end{equation}
\end{proposition}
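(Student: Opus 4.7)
The plan is to realize the Cauchy data as an element of $\sK_\infty$, run the unitary group generated by $A$, and then read off smoothness and the energy identity.

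First I would set $\Phi_i := (\phi_i, \phi_i|_{\del \Sigma})$ for $i = 0, 1$. The trace $\phi_i|_{\del \Sigma}$ is well-defined and lies in $H^\infty(\del \Sigma)$ by the trace theorem, so $\Phi_i \in H$. The crucial step is to show that $\Phi_i \in \sD_\infty := \cap_k \dom(\Delta^k)$. For this I compare the list of conditions in the characterization of $\dom(\Delta^k)$ given just before the proposition with the hypotheses: the bulk and boundary Sobolev regularities $\phi \in H^{2k}(\Sigma)$ and $\phi| \in H^{2k}(\del \Sigma)$ are supplied by $\phi_i \in H^\infty(\R^d_+)$ (plus the trace theorem), the matching condition $\phi|_{\del \Sigma} = \phi|$ holds by construction, and the conditions $\del_\perp^{2j-2} \phi|_{\del \Sigma} = c^{-1} \del_\perp^{2j-3} \phi|_{\del \Sigma}$ for $j \geq 2$ are exactly \eqref{eq:BoundaryConditionInitialData} under the reindexing $k \mapsto j-2$. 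Hence $\Psi(0) := (\Phi_0, \Phi_1) \in \sK_\infty$.

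Next, the operator $A$ is self-adjoint on each $\sK_r$ with $A \sK_{r+1} \subset \sK_r$, so the unitary evolution $\Psi(t) = e^{-i A t} \Psi(0)$ preserves every $\sK_r$, hence preserves $\sK_\infty$. Writing $\Psi(t) = (\Phi(t), \del_t \Phi(t))$, I have $\del_t^m \Psi(t) = (-iA)^m \Psi(t)$. From the definition \eqref{eq:skal_D_r} one checks $\norm{A \cdot}_{\sK_r}^2 = \norm{\cdot}_{\sK_{r+1}}^2$, so iteration gives $\norm{\del_t^m \Psi(t)}_{\sK_r}^2 = \norm{\Psi(t)}_{\sK_{r+m}}^2$. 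Unitarity of $e^{-i A t}$ identifies this with $\norm{\Psi(0)}_{\sK_{r+m}}^2$, and expanding the $\sK_{\bullet}$ norm in its $\sD_{\bullet}$ components yields \eqref{eq:CauchyBoundGlobal} with $r = k$.

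Smoothness of $\phi(t, \cdot, \cdot)$ at each fixed $t$ follows from $\phi(t) \in \cap_k H^{2k}(\R^d_+) \subset C^\infty(\overline{\R^d_+})$ via Sobolev embedding, and joint smoothness in $(t, x, z)$ follows because the energy identity just derived bounds $\del_t^m \phi(t)$ in arbitrary spatial Sobolev norms uniformly on compact time intervals, so mixed derivatives can be estimated in $L^\infty$ on any compact set. The preservation of \eqref{eq:BoundaryConditionInitialData} under time evolution is automatic from $\Phi(t) \in \sD_\infty$ and the characterization of $\dom(\Delta^k)$. Uniqueness reduces to standard semigroup theory: any smooth solution yields a $C^1$ curve in $\sK_0$ satisfying $i \del_t \Psi = A \Psi$, which by Stone's theorem must equal $e^{-i A t} \Psi(0)$. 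The only nontrivial step is the identification in the first paragraph of \eqref{eq:BoundaryConditionInitialData} with the infinite tower of compatibility conditions intrinsic to $\sD_\infty$; everything else is abstract functional analysis on the self-adjoint generator $A$.
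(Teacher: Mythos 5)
Your proposal is correct and follows essentially the same route as the paper, whose ``proof'' is simply the chain of computations preceding the proposition: identifying the Cauchy data with an element of $\sK_\infty$ via the characterization of $\dom(\Delta^k)$, applying the unitary group $e^{-iAt}$ generated by the self-adjoint $A$, and reading off the energy identity from $\norm{A^m\Psi(t)}^2_{\sK_s} = \norm{\Psi(0)}^2_{\sK_{s+m}}$. Your additional explicit remarks on the reindexing of \eqref{eq:BoundaryConditionInitialData}, Sobolev embedding for smoothness, and uniqueness via Stone's theorem are details the paper leaves implicit but are entirely consistent with its argument.
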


As usual, local energy estimates are very useful to prove causal propagation. As can be expected from the action \eqref{eq:action}, the boundary should be taken into account with its own weight:
\begin{proposition}
\label{prop:EnergyEstimate}
Let $\Sigma_0$, $\Sigma_1$ be two equal time surfaces, with $\Sigma_1$ in the future of $\Sigma_0$. Let $S_0 \subset \Sigma_0$ and $S_1 = D^+(S_0) \cap \Sigma_1$, \cf Figure~\ref{fig:CausalDomain}. Then for a solution $\phi$ to \eqref{eq:eom_bk}, \eqref{eq:eom_bd}, it holds
\begin{multline}
\label{eq:EnergyEstimate}
 \int_{S_1} (\del_0 \phi)^2 + g^{i j} \del_i \phi \del_j \phi + \mu^2 \phi^2 + c \int_{S_1 \cap \del M} (\del_0 \phi)^2 + h^{i j} \del_i \phi \del_j \phi + \mu^2 \phi^2 \\
 \leq \int_{S_0} (\del_0 \phi)^2 + g^{i j} \del_i \phi \del_j \phi + \mu^2 \phi^2 + c \int_{S_0 \cap \del M} (\del_0 \phi)^2 + h^{i j} \del_i \phi \del_j \phi + \mu^2 \phi^2.
\end{multline}
\end{proposition}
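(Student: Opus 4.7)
The plan is to derive the estimate from a combined bulk--boundary energy current, where the weight $c$ of the boundary piece is precisely what is needed for the bulk-to-boundary flux to cancel. Multiplying the bulk equation \eqref{eq:eom_bk} by $\del_0\phi$ and rewriting as a divergence gives a local conservation law $\del_\mu J^\mu_\bk = 0$ in $M$, with energy density
\[
 J^0_\bk = \tfrac{1}{2}\bigl((\del_0\phi)^2 + g^{ij}\del_i\phi\,\del_j\phi + \mu^2\phi^2\bigr)
\]
and momentum density $J^i_\bk = -\del_0\phi\,\del^i\phi$. The analogous manipulation intrinsic to $\del M$, starting from \eqref{eq:eom_bd} multiplied by $\del_0\phi$, yields $\del_\mu J^\mu_\bd = c^{-1}\del_0\phi\,\del_\perp\phi$ on $\del M$, with $J^0_\bd$ the analogous expression with $g$ replaced by $h$. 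The failure of conservation on $\del M$ is exactly the perpendicular flux that couples bulk to boundary.

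Next, set $V = D^+(S_0) \cap \{t_0 \leq t \leq t_1\}$ and $B = V \cap \del M$. The boundary of $V$ decomposes into $S_0$, $S_1$, the timelike piece $B$, and a lateral null hypersurface $L$ of generators starting at $\del S_0$. Applying the divergence theorem to $J_\bk$ on $V$, and tracking outward normals carefully, gives
\[
 \int_{S_1} J^0_\bk - \int_{S_0} J^0_\bk + \int_L J_\bk\cdot n^L + \int_B \del_0\phi\,\del_\perp\phi = 0,
\]
while the intrinsic divergence theorem for $J_\bd$ on $B$, whose boundary in $\del M$ consists of $S_0\cap\del M$, $S_1\cap\del M$ and $\ell := L\cap\del M$, gives
\[
 \int_{S_1\cap\del M} J^0_\bd - \int_{S_0\cap\del M} J^0_\bd + \int_\ell J_\bd\cdot n^\ell = c^{-1}\int_B \del_0\phi\,\del_\perp\phi.
\]
Adding $c$ times the second identity to the first, the $\int_B$ terms cancel exactly, leaving
\[
 \int_{S_1} J^0_\bk + c\int_{S_1\cap\del M} J^0_\bd - \int_{S_0} J^0_\bk - c\int_{S_0\cap\del M} J^0_\bd + \int_L J_\bk\cdot n^L + c\int_\ell J_\bd\cdot n^\ell = 0.
\]

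Finally, both lateral flux terms are non-negative by the standard dominant-energy computation: on the bulk null cone $L$ one has $J_\bk\cdot n^L = \tfrac{1}{2}(\del_0\phi - \del_r\phi)^2 + \tfrac{1}{2}|\del_T\phi|^2 + \tfrac{1}{2}\mu^2\phi^2 \geq 0$ pointwise, where $\del_r$ is the outward spatial derivative along the generator and $\del_T$ is the transverse spatial gradient; the identical calculation on the flat timelike hypersurface $\del M$ gives $J_\bd\cdot n^\ell \geq 0$. Dropping these non-negative contributions produces \eqref{eq:EnergyEstimate} up to the overall factor $\tfrac{1}{2}$. The main obstacle I expect is the geometric bookkeeping for the half-space: confirming that $\ell = L\cap\del M$ is a null submanifold of $\del M$ lying correctly in the boundary of $B$, and ensuring no extra contributions arise at the codimension-two corner $\del S_0 \cap \del\Sigma$. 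Once the geometry is set up, the proof is just the standard energy method applied once in the bulk and once intrinsically on the boundary, combined with the key cancellation.
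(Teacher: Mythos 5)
Your proposal is correct and follows essentially the same route as the paper: the paper works with the bulk stress--energy tensor $T_{\mu\nu}$ and a boundary stress--energy tensor $T|_{ab}$ (carrying the weight $c$) satisfying $\del^a T|_{ab} = -T_{\perp b}$, applies the divergence theorem once over $D^+(S_0)\cap J^-(\Sigma_1)$ and once intrinsically over its intersection with $\del M$, and cancels the perpendicular flux terms exactly as you do, discarding the lateral contributions via the dominant energy condition. Your explicit sum-of-squares verification of the null-flux positivity is just the concrete form of that last step.
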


\begin{figure}
\begin{center}
\includegraphics[width=0.8\textwidth]{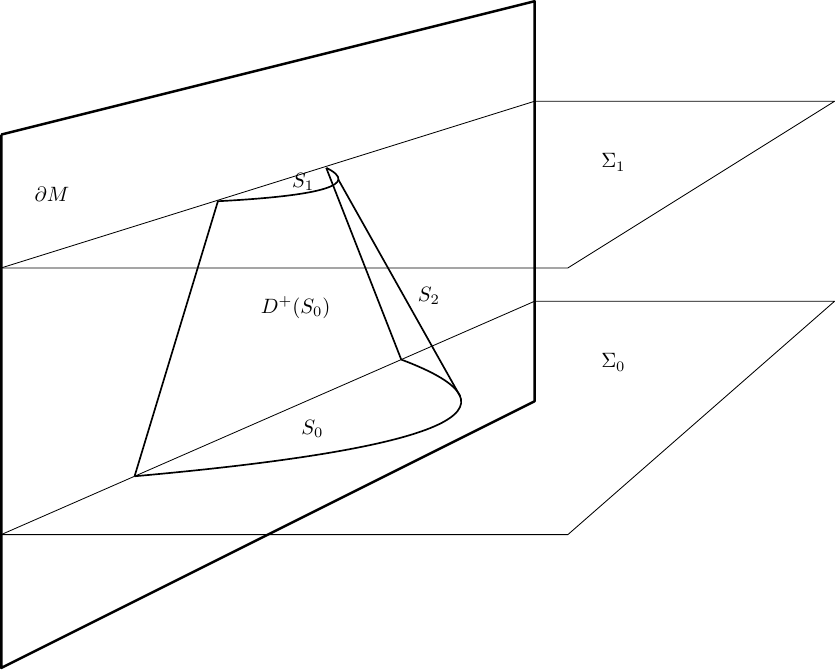}
\caption{Illustration of the geometric setup in Proposition~\ref{prop:EnergyEstimate}.}
\label{fig:CausalDomain}
\end{center}
\end{figure}

\begin{proof}
We consider the bulk and boundary stress-energy tensors
\begin{align*}
 T_{\mu \nu} & = \del_\mu \phi \del_\nu \phi - \tfrac{1}{2} g_{\mu \nu} \left( \del_\lambda \phi \del^\lambda \phi + \mu^2 \phi^2 \right), \\
 T|_{a b} & = c \left[ \del_a \phi \del_b \phi - \tfrac{1}{2} h_{a b} \left( \del_c \phi \del^c \phi + \mu^2 \phi^2 \right) \right],
\end{align*}
where roman indices refer to the coordinates on $\del M$. Obviously, $T_{\mu \nu}$ is conserved on-shell. For the boundary stress-energy tensor one finds
\beq
\label{eq:divT_bdy}
 \del^a T|_{a b} = - T_{\perp b}.
\eeq
Furthermore, both $T_{\mu \nu}$ and $T|_{a b}$ fulfill the dominant energy condition, i.e., for future pointing time-like $\xi^\mu$, $\eta^a$, we have
\begin{align*}
 T_{\mu \nu} \xi^\mu \xi^\nu & \geq 0, & T^\mu_\nu \xi^\nu \text{ time-like or null}, \\ 
 T|_{a b} \eta^a \eta^b & \geq 0, & T^a_b \eta^b \text{ time-like or null}.
\end{align*}
Now we choose $\xi = e^0$ and integrate $\del^\mu T_{\mu \nu} \xi^\nu$ over $D = D^+(S_0) \cap J^-(\Sigma_1)$, obtaining
\[
 \int_{S_0} T_{00} = \int_{S_1} T_{00} + \int_{S_2} \ell^\mu T_{\mu 0} - \int_{\del D} T_{\perp 0},
\]
where $\ell$ is the future directed normal to $S_2$, \cf Figure~\ref{fig:CausalDomain}. We also choose $\eta = e^0$ and integrate $\del^a T|_{a b} \eta^b$ over $\del D$, obtaining
\[
 \int_{S_0 \cap \del M} T|_{00} = \int_{S_1 \cap \del M} T|_{00} + \int_{S_2 \cap \del M} p^a T|_{a 0} + \int_{\del D} T_{\perp 0},
\]
where we used \eqref{eq:divT_bdy}, and $p$ is the future directed normal to $S_2 \cap \del M$. Summing the previous two equations and using the dominant energy condition, one finds \eqref{eq:EnergyEstimate}.
\end{proof}

As usual, such energy estimates establish the causal propagation of smooth solutions. Because of the causal propagation, one may refine \eqref{eq:CauchyBoundGlobal} by considering the norms on the \lhs \wrt the Sobolev spaces on $S_1$ and $S_1 \cap \del M$. Given arbitrary smooth initial data that fulfills \eqref{eq:BoundaryConditionInitialData}, one can construct solutions up to time $T$ as follows: Consider a dense enough covering of the Cauchy surface with open regions $U_i$ of diameter $3T$. For each $i$, cut off the initial data outside of $U_i$ and solve the wave equation. Then glue the solutions together. Hence, we obtain:

\begin{proposition}
The wave equation \eqref{eq:eom_bk}, \eqref{eq:eom_bd} for $\Sigma = \R^d_+$ and $\mu > 0$ is well-posed for smooth initial data $(\phi_0, \phi_1) \in C^\infty(\R^d_+) \times C^\infty(\R^d_+)$ fulfilling \eqref{eq:BoundaryConditionInitialData}, i.e., there exists a unique smooth solution $\phi$ which depends continuously and causally on the initial data  in the sense that, for $\Phi(t) = (\phi(t), \phi(t)|_{\del \Sigma})$,
\begin{equation}
\label{eq:CauchyBoundCausal}
 \norm{\del_t^m \Phi(t)}^2_{\sD_{k+1}(S_1)} + \norm{\del_t^{m+1} \Phi(t)}^2_{\sD_k(S_1)} \leq \norm{\Phi_0}^2_{\sD_{k+m+1}(S_0)} + \norm{\Phi_1}^2_{\sD_{k+m}(S_0)}.
\end{equation}
Here we used the same convention as in Proposition~\ref{prop:EnergyEstimate}, and $\norm{\Phi}^2_{\sD_k(S)}$ stands for the restriction to $S$ and $S \cap \del \Sigma$ of the integrals in \eqref{eq:norm_D_k_odd}, \eqref{eq:norm_D_k_even}.
\end{proposition}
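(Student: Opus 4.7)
The plan is to reduce the present proposition to the preceding one (which settles the case of Cauchy data in $\sK_\infty$) via finite propagation speed, while simultaneously upgrading the global bound \eqref{eq:CauchyBoundGlobal} to its causal version \eqref{eq:CauchyBoundCausal} by applying the energy estimate to time-derivatives of the solution.

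First I would establish \eqref{eq:CauchyBoundCausal} for $\sK_\infty$ Cauchy data. Since the coefficients of \eqref{eq:eom_bk}, \eqref{eq:eom_bd} are $t$-independent, every time-derivative $\del_t^j \phi$ again solves the same coupled system, so Proposition~\ref{prop:EnergyEstimate} applies directly to $\del_t^m \phi$. This already yields the $k = 0$ case of \eqref{eq:CauchyBoundCausal}, up to identifying the integrand in \eqref{eq:EnergyEstimate} with the local norm on the \lhs of \eqref{eq:CauchyBoundCausal}. For $k \geq 1$ I would use the equations of motion fibrewise on $S_1$ to trade $-\Delta_\Sigma + \mu^2$ for $-\del_t^2$ (and analogously on $S_1 \cap \del M$), reproducing the algebra that led to \eqref{eq:norm_D_k_odd}, \eqref{eq:norm_D_k_even}, but now for integrals restricted to $S_1$ and $S_1 \cap \del M$ rather than to all of $\Sigma$ and $\del \Sigma$.

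For arbitrary smooth initial data satisfying \eqref{eq:BoundaryConditionInitialData} I would follow the localization recipe sketched in the paragraph above the proposition. Fix $T > 0$, cover $\R^d_+$ by open sets $\{U_i\}$ of diameter at most $3T$, and choose a locally finite subordinate partition of unity $\{\chi_i\}$ with the crucial additional property that each $\chi_i$, in a neighborhood of $\del \Sigma$, depends only on the tangential coordinates (e.g.\ as a product $\chi_i(x,z) = \chi_i^{\mathrm{t}}(x) \chi^{\mathrm{n}}(z)$ with $\chi^{\mathrm{n}} \equiv 1$ near $z = 0$). Then $\chi_i \phi_0, \chi_i \phi_1$ are compactly supported and smooth, and because $\del_\perp^\ell (\chi_i \phi_j)|_{\del \Sigma} = \chi_i|_{\del \Sigma}\, \del_\perp^\ell \phi_j|_{\del \Sigma}$ for every $\ell$, they continue to satisfy \eqref{eq:BoundaryConditionInitialData}. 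Each such datum lies in $\sK_\infty$, so the previous proposition produces a smooth solution $\phi^{(i)}$; by causal propagation $\phi^{(i)}$ is supported in $D^+(\supp \chi_i)$, hence $\phi(t) := \sum_i \phi^{(i)}(t)$ is a locally finite sum defining a smooth solution on $[0, T] \times \R^d_+$ with the prescribed Cauchy data.

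Uniqueness is an immediate consequence of Proposition~\ref{prop:EnergyEstimate}: the difference of two solutions with identical Cauchy data on $S_0$ has vanishing local energy on every $S_1 = D^+(S_0) \cap \Sigma_1$ and hence vanishes on $D^+(S_0)$; linearity together with \eqref{eq:CauchyBoundCausal} then supplies continuous dependence. The principal obstacle in this plan is the infinite tower of compatibility conditions \eqref{eq:BoundaryConditionInitialData}, which a generic spatial cutoff would destroy by introducing non-trivial $z$-dependence near the boundary; the tangential choice of $\chi_i$ above is precisely what neutralizes this. A secondary bookkeeping point is to ensure that the restriction conventions defining $\norm{\Phi}^2_{\sD_k(S)}$ really match the quantity naturally produced by iterating the energy estimate on $\del_t^m \Phi$, which should reduce to the same algebraic manipulation that yielded \eqref{eq:norm_D_k_odd}, \eqref{eq:norm_D_k_even}.
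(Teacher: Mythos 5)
Your proposal is correct and follows essentially the same route as the paper, whose ``proof'' is precisely the sketch preceding the proposition: refine the global bound \eqref{eq:CauchyBoundGlobal} to \eqref{eq:CauchyBoundCausal} by applying the energy estimate of Proposition~\ref{prop:EnergyEstimate} to time derivatives of the solution (using the equations of motion to rewrite the $\sD_k$-norms), then obtain existence for general smooth data by cutting off on a cover of diameter $3T$, solving via the previous proposition, and gluing by causal propagation, with uniqueness again from the local energy estimate. Your observation that the cutoffs must be chosen tangential near $\del\Sigma$ so as not to destroy the compatibility conditions \eqref{eq:BoundaryConditionInitialData} is a genuine and worthwhile refinement of a point the paper leaves implicit.
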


As usual, \cf \cite{WaldGR} for example, the estimates on the \lhs of \eqref{eq:CauchyBoundCausal} may be converted to supremum estimates using the Sobolev embedding theorems. The causal propagation can also be established for distributions, using the duality of $\sD_r$ and $\sD_{-r}$, analogously to the Dirichlet case discussed in \cite[Sect.~6.1]{TaylorPDEsI}.

The global energy estimate \eqref{eq:CauchyBoundGlobal} for $\mu = 0$ and $m = k = 0$ was already derived in \cite{Vitillaro15}. However, to the best of my knowledge, local energy estimates and thus causal propagation have not yet been proven in the literature.

\begin{remark}
With small adjustments, one can also treat the massless case $\mu = 0$. The technical difficulty is that $\Delta$ is not strictly positive so that $\sD_r = \dom \Delta^{\frac{r}{2}}$ is not complete \wrt the scalar product \eqref{eq:skal_D_r}. One may for example make use of the causal propagation to restrict the discussion about propagation in a bounded region to the case of a supplementary Dirichlet boundary at some $z = L$. This yields a strictly positive $\Delta$, so that the above techniques can be applied. Also the case with different masses for the bulk and the boundary may be treated. For $\mu_\bk \geq \mu_\bd$, the adjustments are minor. However, for $\mu_\bk < \mu_\bd$, there will be a bound state, but $\Delta$ is still strictly positive for $\mu_\bk^2 > 0$.
\end{remark}

\begin{remark}
For the case $c < 0$, there seem to be severe difficulties. First of all, $H$ is then no longer a Hilbert, but a Krein space. This may be possible to deal with, one would have to show that $\Delta$ is definitizable and regular at infinity. But even then the energy estimates will no longer work, so it will be difficult, if not impossible, to establish continuous dependence on the initial data and causality.
\end{remark}

As can be seen by the form of the energy estimates, the boundary may carry some energy. It is instructive to consider a concrete example. Consider $\mu = 0$ and a singularity
\[
 \phi = \delta(t+z)
\] 
infalling to the boundary for $t < 0$. The full solution to the equations \eqref{eq:eom_bk}, \eqref{eq:eom_bd} is then (see below)
\begin{align}
\label{eq:ExplicitSolution}
 \phi & = \delta(t+z) - \delta(t-z) + 2 c^{-1} e^{-\frac{t-z}{c}} \theta(t-z),  \\
 \phi| & = 2 c^{-1} e^{-\frac{t}{c}} \theta(t), \nonumber
\end{align}
with $\theta$ the Heaviside function. We see that the boundary absorbs some energy and radiates it off on the time-scale $c$. It is also obvious from this example why negative $c$ seems physically unacceptable.

The solution \eqref{eq:ExplicitSolution} has to be understood in the weak sense, i.e., for any test function $\vp \in \D(\R^{d+1})$ with $\Box_h \vp|_{z = 0} = - c^{-1} \del_z \vp|_{z=0}$, we have
\[
 \int_{\R^{d+1}_+} \phi \Box \vp = 0.
\]
To check this, it is convenient to introduce light cone coordinates
\begin{align*}
 u & = t + z, &
 v & = t - z.
\end{align*}
For convenience, we may assume $d=1$. We then compute
\begin{align*}
 \int_{\R^{2}_+} \phi \Box \vp & = - 2 \int_{v < u} \left( \delta(u) - \delta(v) + 2 c^{-1} e^{-\frac{v}{c}} \theta(v) \right) \left( \del_u \del_v \vp \right) \ud u \ud v \\
 & = - 2 \del_u \vp(0,0) - 2 \del_v \vp(0,0) + 4 c^{-1} \int_0^\infty e^{-\frac{v}{c}} \del_v \vp|_{u = v} \ud v \\
 & = - 2 \del_t \vp(0,0) + 2 c^{-1} \int_0^\infty e^{-\frac{t}{c}} \left( \del_t - \del_z \right) \vp(t, 0) \ud t \\
 & = - 2 \del_t \vp(0,0) + 2 c^{-1} \int_0^\infty e^{-\frac{t}{c}} \left( \del_t - c \del_t^2 \right) \vp(t, 0) \ud t,
\end{align*}
where in the last step we used the boundary condition for $\vp$. One easily checks by integration by parts that this vanishes.

\begin{remark}
\label{rem:BoundaryBehaviorGeneral}
The fact that the restriction of the normalized modes with transversal momentum $q$ falls off like $q^{-1}$, \cf \eqref{eq:Phi_k_q}, will be important in the following. This property is generic and does not depend on the specific form of the boundary, the equality of the two masses in the bulk and the boundary or the presence of curvature and curvature couplings. Choosing normal coordinates, we may write the boundary condition as
\[
 \del_\perp^2 \phi = c^{-1} \del_\perp \phi + R \phi,
\]
with $R$ some operator that does not act in the transversal direction. For large transversal momentum the mode is well approximated by
\[
 \phi_{q,k}(x) \simeq C f_k(x) \sin(q z + \phi_q)
\]
with $f_k$ normalized modes on the boundary and a normalization constant $C$ which is essentially independent of $q$. We conclude that for large transversal momentum, where the term $R \phi$ may be neglected, and the boundary at $z = 0$, we have
\[
 \tan (\phi_q) \simeq (c q)^{-1}.
\]
For large $q$, this means
\[
 C \sin(\phi_q) \simeq C (c q)^{-1},
\]
so that the normalization constant of the modes on the boundary indeed falls off like $\frac{1}{q}$.
\end{remark}

We close this section by briefly discussing the case of the strip $\Sigma = \R^{d-1} \times [-S, S]$.
The statement analogous to Proposition~\ref{prop:SelfAdjoint} is now:
\begin{proposition}
\label{prop:SelfAdjointStrip}
With $\Sigma = \R^{d-1} \times [-S, S]$ and on the domain $D$, $\Delta$ is self-adjoint with spectrum contained in $[\mu^2, \infty)$. A normalized complete system of generalized eigenfunctions of $\Delta$ is given by
\begin{align}
\label{eq:Phi_k_m}
 \phi_{k, m} & = c_m (2 \pi)^{- \frac{d-1}{2}} S^{-\frac{1}{2}} e^{i k x} \begin{cases} \cos q_m z & m \text{ even } \\ \sin q_m z & m \text{ odd }\end{cases} \\
 \Phi_{k, m} & = (\phi_{k, m}, \phi_{k, m}|_{\del \Sigma}), \nonumber
\end{align}
with $k \in \R^{d-1}$, $m \in \N$ and the eigenvalue
\beq
\label{eq:omega_k_m}
 \omega^2_{k, m} = k^2 + q_m^2 + \mu^2.
\eeq
Here $\{ q_m \}$ is an increasing sequence of non-negative real numbers with $q_0 = 0$ and
\beq
\label{eq:q_Interval}
 q_{2 p} \in ((p-\tfrac{1}{2})\tfrac{\pi}{S},p \tfrac{\pi}{S}), \qquad q_{2 p-1} \in ((p-1)\tfrac{\pi}{S},(p-\tfrac{1}{2}) \tfrac{\pi}{S})
\eeq
for all $p \geq 1$. For large enough $m$, the $q_m$ are bounded by
\beq
\label{eq:q_bound}
 \frac{\pi}{2 S} (m-1) + (1-\delta) \frac{2 c^{-1}}{\pi (m-1)} \leq q_m \leq \frac{\pi}{2 S} (m-1) + \frac{2 c^{-1}}{\pi (m-1)}  
\eeq
for any $\delta > 0$. For large $m$, the normalization constants behave as $c_m = 1 + \order(m^{-2})$. The restriction to the boundary is given by
\beq
\label{eq:phi_k_m_restricted}
 \phi_{k, m}|_{\del_{\pm} \Sigma}(x) = (\pm)^m (2 \pi)^{-\frac{d-1}{2}} d_m e^{i k x}
\eeq
where the $d_m$ are real, non-zero and fulfill
\[
 (1-\delta) \frac{2 c^{-1} \sqrt{S}}{\pi (m-1)} \leq \betrag{d_m} \leq (1+\delta) \frac{2 c^{-1} \sqrt{S}}{\pi (m-1)}
\]
for any $\delta >0$ and large enough $m$.
\end{proposition}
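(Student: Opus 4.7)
The plan is to follow the template of Proposition~\ref{prop:SelfAdjoint} while accommodating two new features: the transverse direction is now a compact interval, so that the $q$-spectrum becomes discrete, and both ends of the strip supply boundary conditions. Self-adjointness on the analogue of $D$ is established by repeating the integration-by-parts calculation from Proposition~\ref{prop:SelfAdjoint}, now applied at both $z = \pm S$: all cross-terms cancel once $\phi|_{\del \Sigma} = \phi|$ is imposed on each component. Since $\Delta$ commutes with the parity $z \mapsto -z$, one may diagonalize in the two parity sectors independently, with ansatz $e^{i k x} \cos q z$ and $e^{i k x} \sin q z$, reducing the transverse problem to an ordinary eigenvalue equation on $[-S,S]$.

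Inserting the ansatz into the boundary condition \eqref{eq:bdyCondition} at $z = S$, where $\del_\perp = - \del_z$, yields the transcendental equations
\begin{equation*}
 \tan q S = - c q \quad \text{(even)}, \qquad \cot q S = c q \quad \text{(odd)},
\end{equation*}
and the same equations at $z = - S$ follow by symmetry. An elementary comparison of the graphs of $\tan q S$ (resp.\ $\cot q S$) with the line $-cq$ (resp.\ $cq$) proves existence and uniqueness of exactly one root in each of the intervals of \eqref{eq:q_Interval}, together with the trivial solution $q_0 = 0$ in the even sector. For \eqref{eq:q_bound} I would write $q_m S = (m-1) \pi/2 + \eps_m$, linearize the transcendental equation near its singularity to obtain $\eps_m = (c q_m)^{-1} + O(m^{-3})$, and iterate once using the leading behavior $q_m \approx (m-1) \pi/(2 S)$ to arrive at $\eps_m = 2 S / (c \pi (m-1)) \cdot (1 + O(m^{-2}))$; the higher-order remainder can be absorbed into the $\delta$ factor for $m$ large enough.

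The normalization is a direct computation. Using the transcendental equations to evaluate $\sin(2 q_m S)/(2 q_m) = \sin(q_m S) \cos(q_m S)/q_m$, one finds
\begin{equation*}
 \int_{-S}^S \cos^2(q_m z) \ud z = S - c \cos^2(q_m S), \qquad \int_{-S}^S \sin^2(q_m z) \ud z = S - c \sin^2(q_m S),
\end{equation*}
and adding the boundary contribution $2 c \cos^2(q_m S)$ (respectively $2 c \sin^2(q_m S)$) from the two endpoints produces the squared norm $S + c \cos^2(q_m S)$ (resp.\ with $\sin$). Since in both sectors this boundary factor equals $(1 + c^2 q_m^2)^{-1} = \order(m^{-2})$, setting $c_m^2 = (1 + (c/S) \cos^2(q_m S))^{-1}$ (or the analogous expression with $\sin$) gives the claimed $c_m = 1 + \order(m^{-2})$. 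The restriction \eqref{eq:phi_k_m_restricted} is then read off by evaluation at $z = \pm S$, with the sign $(\pm)^m$ reflecting the parity of the mode and $d_m$ proportional to $c_m \cos(q_m S)$ or $c_m \sin(q_m S)$; the quantitative bounds on $|d_m|$ follow by combining $|\cos(q_m S)|, |\sin(q_m S)| = (1 + c^2 q_m^2)^{-1/2}$ with \eqref{eq:q_bound}. The main technical obstacle is the quantitative bookkeeping for \eqref{eq:q_bound}: for each $\delta > 0$ one must verify that the second-order terms omitted in the Taylor expansion of $\tan$ (resp.\ $\cot$) near its singularity are bounded by $\delta \cdot 2/(c \pi (m-1))$ once $m$ is sufficiently large, which is routine but requires care in tracking constants.
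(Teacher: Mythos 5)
Your proposal is correct and follows essentially the same route as the paper's proof: parity decomposition, the transcendental equations $\tan qS = -cq$ (even) and $q \tan qS = c^{-1}$ (odd), localization of the roots by monotonicity of $\tan$, the expansion $q_m S = \tfrac{\pi}{2}(m-1) + \eps_m$ for the asymptotic bound \eqref{eq:q_bound}, and the explicit norm computation yielding $S + c\cos^2(q_m S)$ (resp.\ $\sin$). Your use of the exact identities $\cos^2(q_m S) = (1+c^2 q_m^2)^{-1}$, resp.\ $\sin^2(q_m S) = (1+c^2 q_m^2)^{-1}$, for the normalization and for $d_m$ is a slightly cleaner packaging of the paper's appeal to the behavior of $\sin x$ near $x = N\pi$, but the substance is identical.
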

\begin{proof}
The statement on self-adjointness and the spectrum of $\Delta$ follows as in the proof of Proposition~\ref{prop:SelfAdjoint}.
Due to the symmetry of the problem, the separation ansatz \eqref{eq:Phi_k_m} is general enough. For the even/odd modes, the boundary condition then implies
\begin{align}
\label{eq:EV_even}
 c^{-1} \tan q S & = - q, \\
\label{eq:EV_odd}
 q \tan q S & = c^{-1},
\end{align}
which only have real solutions $q$ due to $\Im \tan(x+i y) \gtrless 0$ for $y \gtrless 0$.
The statement \eqref{eq:q_Interval} for odd $m$ follows from the monotonicity of $\tan$ on the interval $(\pi (p-\frac{3}{2}), \pi (p-\frac{1}{2}))$. Hence, for odd $m$, we must have $q_m = \frac{\pi}{2 S}((m-1) + \eps_m)$ with some $\eps_m > 0$. Clearly, for $\frac{\pi}{2 S} (m-1) \gg c^{-1}$, we must have $\eps_m \ll 1$. Using that
\[
 \eps \leq \tan (\pi N + \eps) \leq (1+\delta) \eps
\]
for $N \in \N$, $\delta>0$ and $\eps$ small enough, we have
\[
 \frac{\pi^2}{4 S} (m-1) \eps_m \leq q_m \tan S q_m \leq (1+\delta) \frac{\pi^2}{4 S} (m-1) \eps_m 
\]
for $m$ large enough. With \eqref{eq:EV_odd}, this can be used to bound $\eps_m$ to show \eqref{eq:q_bound}. For even $m$ one argues analogously. For the statement on the normalization, we compute, for odd $m$,
\[
 \skal{\Phi_{k,m}}{\Phi_{k',m}} = \betrag{c_m}^2 S^{-1} \delta(k-k') \left[ \left( S - \frac{\sin(2 q_m S)}{2 q_m} \right) + 2 c \sin^2(q_m S) \right]
\]
By \eqref{eq:q_bound}, the expression in square brackets on the \rhs is $S + \order(m^{-2})$. The bounds on $d_m$ then follow again from \eqref{eq:q_bound} and the behavior of $\sin x$ near $x= N \pi$. For even $m$, one argues analogously. That the $d_m$ are non-zero follows from \eqref{eq:q_Interval}.
\end{proof}

With this result, one may continue as for the half-space to establish well-posedness of the Cauchy problem and causal propagation.


\section{Quantization}
\label{sec:Quantization}

We begin to study the quantization for the case $\Sigma = \R^{d-1} \times [-S, S]$, where the holographic mapping has nicer properties than for the half-space $\R^d_+$.

Given the orthonormal basis $\{ \Phi_{k,m} \}_{k \in \R^{d-1}, m \in \N}$, \cf Proposition~\ref{prop:SelfAdjointStrip}, quantization proceeds canonically, i.e., we define the one-particle Hilbert space
\[
 \HS_1 = L^2(\R^{d-1}) \otimes l^2(\N),
\]
the corresponding symmetric Fock space $\sF$ and the usual annihilation and creation operators $a_m(k)$, $a_m(k)^*$ fulfilling
\[
 [a_{m}(k), a_{m'}(k')^*] = \delta_{m m'} \delta(k-k').
\]
For $\mu > 0$, we define, for $F = (f, f|) \in \dom(\Delta^{-\frac{1}{4}})$ and $G \in \dom(\Delta^{\frac{1}{4}})$, the time zero fields as
\begin{align*}
 \phi_0(F) & = \sum_m \int \frac{\ud^{d-1} k}{\sqrt{2 \omega_{k,m}}} \left( \skal{\bar F}{\Phi_{k,m}} a_m(k) + \skal{\Phi_{k,m}}{F} a_m(k)^* \right), \\
 \pi_0(G) & = - i \sum_m \int \ud^{d-1} k \frac{\sqrt{\omega_{k,m}}}{\sqrt{2}} \left( \skal{\bar G}{\Phi_{k,m}} a_m(k) - \skal{\Phi_{k,m}}{G} a_m(k)^* \right).
\end{align*}
Using the completeness of the basis, it is straightforward to check that these fulfill the canonical equal time commutation relations, i.e.,
\begin{align*}
 [\phi_0(F), \phi_0(F')] & = 0, \\
 [\pi_0(G), \pi_0(G')] & = 0, \\
 [\phi_0(F), \pi_0(G)] & = i \skal{\bar F}{G}, 
\end{align*}
\cf the symplectic form \eqref{eq:SymplecticForm}. Interestingly, the time-zero fields may be restricted to the boundary: Inserting $F = (0, f|)$, one obtains, using \eqref{eq:phi_k_m_restricted}
\[
 \phi_0(0, f|) = \sum_m \int \frac{\ud^{d-1} k}{\sqrt{2 \omega_{k,m}}} d_m \left( \hat f|(-k) a_m(k) + \hat f|(k) a_m(k)^* \right).
\]
Due to the decay of the coefficients $d_m$, \cf Proposition~\ref{prop:SelfAdjointStrip}, this operator is well defined on a dense domain for $f| \in L^2(\del \Sigma)$. However, this is not the case of the momentum $\pi_0$.

Note that for $\mu = 0$ and $d=1$, the zero mode has to be treated separately, using position and momentum operators $q$, $p$, corresponding to the linear growth of the classical mode. For $\mu = 0$ and $d=2$, we have the infrared problems for the zero mode that are usually present in 1+1 space-time dimension.

When defining space-time fields, we have to decide on the appropriate test function space. In view of the definition of the time-zero fields, it seems natural to allow for test functions $F = (f, f|) \in \sS(M) \oplus \sS(\del M)$ and defining
\begin{multline}
\label{eq:phi}
 \phi(F) = \sum_m \int \ud x^0 \frac{\ud^{d-1} k}{\sqrt{2 \omega_{k,m}}} \left( \skal{\bar F(x^0)}{\Phi_{k,m}} e^{-i\omega_{k,m} x^0} a_m(k) \right. \\
 \left. + \skal{\Phi_{k,m}}{F(x^0)} e^{i\omega_{k,m} x^0} a_m(k)^* \right).
\end{multline}
These have the usual properties of Wightman fields (generalized to the present setting):

\begin{proposition}
\label{prop:Wightman}
Let $\mu > 0$.
The field $\phi(f, f|)$, with $(f, f|) \in \sS(M) \times \sS(\del M)$ real, is essentially self-adjoint on a dense invariant linear domain $\D \subset \sF$. For $\Omega_1, \Omega_2 \in \D$ the maps
\[
 \sS(M) \times \sS(\del M) \ni (f, f|) \mapsto \skal{\Omega_1}{\phi(f, f|) \Omega_2} \in \C
\]
are linear and continuous. The field $\phi$ is causal, i.e.,
\[
 [ \phi(f, f|), \phi(g, g|) ] = 0
\]
if the supports of $(f, f|)$ and $(g, g|)$ are space-like separated. There is a unitary representation $U$ of the proper orthochronous Poincar\'e group $\R^{1, d-1} \rtimes SO^+(1, d-1)$, under which the domain $\D$ is invariant and such that
\[
 U(a, \Lambda) \phi(f, f|) U(a, \Lambda)^* = \phi( f_{(a, \Lambda)}, f|_{(a, \Lambda)} )
\]
with
\begin{align*}
 f_{(a, \Lambda)}(x, z) & = f(\Lambda^{-1} (x - a), z), &
 f|_{(a, \Lambda)}(x) & = f|(\Lambda^{-1} (x - a)).
\end{align*}
The vacuum vector $\Omega \in \D$ is invariant under $U$, cyclic \wrt polynomials of the fields $\phi(f, f|_{\del \Sigma})$ or $\phi(0, f|)$, and the spectrum of $U$ on $\Omega^\perp$ is contained in
\[
 V_\mu = \left\{ p \in \R^{1, d-1} \mid - p^2 \geq \mu^2, p^0 \geq 0 \right\}.
\]
\end{proposition}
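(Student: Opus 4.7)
The strategy is the standard second-quantization argument for Wightman fields, adapted to the modified Hilbert space $H$ and the generalized eigenbasis of Proposition~\ref{prop:SelfAdjointStrip}. First I would rewrite \eqref{eq:phi} in the form $\phi(F) = a(F_-) + a^*(F_+)$ with the one-particle vectors
\[
 F_+(k, m) = \int \ud x^0 \frac{e^{i \omega_{k,m} x^0}}{\sqrt{2 \omega_{k,m}}} \skal{\Phi_{k,m}}{F(x^0)}, \qquad F_- = \overline{F_+} \quad (F \text{ real}).
\]
Using Plancherel in $x^0$, the explicit form of $\Phi_{k,m}$, the bound $\betrag{d_m} \lesssim m^{-1}$ from Proposition~\ref{prop:SelfAdjointStrip} and Schwartz decay of $f$ and $f|$, the map $F \mapsto F_\pm$ is continuous from $\sS(M) \oplus \sS(\del M)$ into $\HS_1$. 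Taking $\D$ to be the algebraic span of $n$-particle vectors with smooth rapidly decaying wave functions, the standard bounds $\norm{a(G) \Psi} \leq \norm{G} \norm{N^{1/2} \Psi}$ and $\norm{a^*(G) \Psi} \leq \norm{G} \norm{(N+1)^{1/2} \Psi}$ give $\phi(F)\D \subset \D$ and show every vector of $\D$ to be analytic for $\phi(F)$; essential self-adjointness then follows from Nelson's theorem, and continuity of the matrix elements $\skal{\Omega_1}{\phi(F) \Omega_2}$ is immediate from the same bounds.

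For causality, the CCR make $[\phi(F), \phi(G)]$ a c-number, and I would identify it with the smeared Pauli--Jordan pairing
\[
 [\phi(F), \phi(G)] = i \skal{F}{(\Delta_r - \Delta_a) G} \cdot \1,
\]
where $\Delta_{r/a}$ are the retarded/advanced propagators for the system \eqref{eq:eom_bk}, \eqref{eq:eom_bd}. Their existence and the fact that $\Delta_r G$ (resp.~$\Delta_a G$) is supported in the causal future (resp.~past) of $\supp G$ follow from the local energy estimate of Proposition~\ref{prop:EnergyEstimate} by a now-standard duality argument; the commutator then vanishes whenever $\supp F$ and $\supp G$ are space-like separated.

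For Poincar\'e covariance, decompose $\HS_1 = \bigoplus_m L^2(\R^{d-1})$. On the $m$-th summand the dispersion relation \eqref{eq:omega_k_m} is precisely the mass-shell condition for mass $\sqrt{q_m^2 + \mu^2} \geq \mu$, so this summand carries the canonical positive-energy representation of $\R^{1, d-1} \rtimes SO^+(1, d-1)$ acting on the first factor of $M = \R^{1, d-1} \times [-S, S]$. Define $U$ as the second quantization of the direct sum of these representations. Invariance of $\Omega$ and of $\D$ is manifest, and $U(a, \Lambda) \phi(F) U(a, \Lambda)^* = \phi(F_{(a, \Lambda)})$ follows by substituting the transformation of $a_m(k), a_m(k)^*$ into \eqref{eq:phi} and using that Lorentz transformations only act on the $x = (x^0, \underline{x})$ arguments, which enter $\Phi_{k,m}$ only through the plane wave $e^{ikx}$. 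The spectrum condition on $\Omega^\perp$ holds because $V_\mu$ is closed under addition and each one-particle summand sits on a mass hyperboloid with mass $\geq \mu$.

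Finally, cyclicity of $\Omega$ reduces to density in $\HS_1$ of the one-particle vectors associated to the respective test-function classes. For $\phi(f, f|_{\del \Sigma})$ with $f \in \sS(M)$, density is manifest from the bulk integration in $F_+$. For the pure boundary fields $\phi(0, f|)$ the mode-$m$ one-particle component is, after the $x^0$-Fourier transform, proportional to $c \, d_m \, (2 \omega_{k,m})^{-1/2} \widehat{f|}(\pm k)$; since every $d_m$ is non-zero by Proposition~\ref{prop:SelfAdjointStrip} and $\widehat{f|}$ exhausts $\sS(\R^{d-1})$, density again follows. The step I expect to require the most care is the causality argument, namely the construction of $\Delta_{r/a}$ on $\sS(M) \oplus \sS(\del M)$ from Proposition~\ref{prop:EnergyEstimate} and the precise identification of the commutator with the pairing above; once this classical ingredient is in hand, all other properties reduce to routine Fock-space computations.
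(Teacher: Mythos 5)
Your proposal is correct in substance but takes a genuinely different route from the paper. The paper's central device is Lemma~\ref{lemma:mapping}: the separation of the mass shells $\omega^2 = k^2 + q_m^2 + \mu^2$ (guaranteed by Proposition~\ref{prop:SelfAdjointStrip}) is used to build a continuous interpolation map $(\hat f^+, \hat f^-) \mapsto \hat f' \in \sS(\R^d)$, which identifies $\phi(f, f|)$ with a generalized free field $\psi(f')$ on $\R^{1,d-1}$ via the canonical Fock-space isomorphism; essential self-adjointness on the invariant domain $\D$ and cyclicity of the vacuum are then imported wholesale from Jost's treatment of generalized free fields, and the same identification is reused later for the holographic map. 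You instead work directly on Fock space: the standard $N^{1/2}$ bounds, analytic vectors and Nelson's theorem for self-adjointness, a direct norm estimate for continuity of the matrix elements, and reduction of cyclicity to totality of the one-particle vectors. This is more self-contained and arguably more elementary, at the cost of not producing the bulk-to-boundary dictionary that the paper needs anyway. Two small points. First, in your cyclicity argument for $\phi(0, f|)$ the mode-$m$ component is $d_m$ times the restriction of the $d$-dimensional Fourier transform $\widehat{f|}$ to the $m$-th mass shell, not a single function of $k$ alone; to conclude totality in $\HS_1 = L^2(\R^{d-1}) \otimes l^2(\N)$ you must be able to control the components on different shells independently, which again rests on the shells being disjoint and separated --- i.e., precisely the interpolation argument of Lemma~\ref{lemma:mapping}. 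You cannot avoid this ingredient, only relocate it. Second, for causality the paper is as terse as you are (canonical equal-time commutation relations plus the causal propagation established in Section~\ref{sec:WaveEquation}); your reformulation via retarded and advanced propagators is equivalent, and your remark that constructing $\Delta_{r/a}$ from the energy estimate is the delicate step is a fair assessment of what both arguments leave implicit.
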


Parts of the proof consist in mapping to the generalized free field $\psi$ on $\R^{1, d-1}$ with momentum space weight
\[
 \sum_m \delta(-k^2 - q_m^2 - \mu^2) \ud^d k.
\]
Obviously, the Fock spaces $\sF$ and $\sF_\psi$ on which $\phi$ and $\psi$ act can be trivially identified by the isomorphism
\[
 \left( (\imath \tilde \Omega)_j \right)_{m_1 \dots m_j}(k_1, \dots, k_j) = \left( \tilde \Omega_j \right)_{m_1 \dots m_j}(k_1, \dots, k_j).
\]
With this identification, the field $\psi$ can be expressed as
\begin{equation}
\label{eq:psi}
 \psi(x) = (2 \pi)^{-\frac{d-1}{2}} \sum_m \int \frac{\ud^{d-1} k}{\sqrt{2 \omega_{k,m}}} \left( e^{-i\omega_{k,m} x^0 + i k x} a_m(k) + e^{i\omega_{k,m} x^0 - i k x} a_m(k)^* \right).
\end{equation}
For the mapping of the fields, we need the following lemma:
\begin{lemma}
\label{lemma:mapping}
Let $\mu > 0$ and $\hat f^\pm \in \sS(\R^{d-1}) \otimes s(\N)$, with $s(\N)$ the space of sequences that fall off faster than any power. There is a linear continuous mapping 
\[
 (\sS(\R^{d-1}) \otimes s(\N))^2 \ni (\hat f^+, \hat f^-) \mapsto \hat f' \in \sS(\R^{d})
\]
such that
\[
 \hat f^\pm_m(k) = \hat f'(\pm \omega_{k, m}, \pm k) \qquad \forall k \in \R^{d-1}, m \in \N,
\]
with $\omega_{k, m}$ given by \eqref{eq:omega_k_m}.
\end{lemma}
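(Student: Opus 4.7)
The plan is to construct $\hat f'$ in three steps: first, interpolate the sequences $\{\hat f^\pm_m(k)\}_m$ to Schwartz functions $F^\pm(q,k)$ on $\R \times \R^{d-1}$ that are \emph{even} in $q$; next, re-express them as $F^\pm(q,k) = G^\pm(q^2,k)$ with $G^\pm \in \sS(\R \times \R^{d-1})$; finally, glue the two halves together via a partition of unity in $k^0$, after the substitution $s = (k^0)^2 - \underline{k}^2 - \mu^2$.

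For the interpolation, by Proposition~\ref{prop:SelfAdjointStrip} the set $\{q_m\}_{m \in \N}$ is discrete with asymptotic spacing $\pi/(2S)$, so one can fix an even $\phi \in C_c^\infty(\R)$ with $\phi(0)=1$ and a radius $\eps > 0$ small enough that the intervals $[\pm q_m - \eps, \pm q_m + \eps]$, $m \geq 0$, are pairwise disjoint. Setting $\chi_m(q) = \phi((q-q_m)/\eps) + \phi((q+q_m)/\eps)$ for $m \geq 1$ and $\chi_0(q) = \phi(q/\eps)$, the sum
\[
 F^\pm(q,k) := \sum_{m=0}^\infty \chi_m(q)\, \hat f^\pm_m(k)
\]
is locally finite, even in $q$, and satisfies $F^\pm(q_m,k) = \hat f^\pm_m(k)$. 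Since $\hat f^\pm_m$ is Schwartz in $k$ with seminorms decaying faster than any power of $m$, and $q_m$ grows linearly, one obtains $F^\pm \in \sS(\R \times \R^{d-1})$, with Schwartz seminorms linearly controlled by those of $\hat f^\pm$.

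Because $F^\pm$ is smooth and even in $q$, the standard fact that smooth even functions factor through $q \mapsto q^2$ yields $G^\pm \in C^\infty([0,\infty)\times \R^{d-1})$ with $F^\pm(q,k) = G^\pm(q^2,k)$; a Seeley-type extension in the $s$-variable turns these into Schwartz functions on $\R \times \R^{d-1}$. Now fix $\rho \in C^\infty(\R)$ with $\rho = 1$ on $[\mu/2, \infty)$ and $\rho = 0$ on $(-\infty, -\mu/2]$, and define
\[
 \hat f'(k^0, \underline{k}) := \rho(k^0)\, G^+\!\bigl((k^0)^2 - \underline{k}^2 - \mu^2,\, \underline{k}\bigr) + \bigl(1-\rho(k^0)\bigr)\, G^-\!\bigl((k^0)^2 - \underline{k}^2 - \mu^2,\, -\underline{k}\bigr).
\]
On the upper mass shell one has $k^0 = \omega_{k,m} \geq \mu$, hence $\rho = 1$ and $\hat f'(\omega_{k,m},k) = G^+(q_m^2,k) = \hat f^+_m(k)$; the lower-shell condition is checked analogously. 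Schwartz behavior of $\hat f'$ follows because the quadratic substitution $(k^0,\underline{k}) \mapsto ((k^0)^2 - \underline{k}^2-\mu^2, \pm\underline{k})$ is proper, so polynomial decay in $s$ and $k$ pulls back to polynomial decay in $(k^0,\underline{k})$; linearity and continuity with respect to the Schwartz seminorms of $(\hat f^+, \hat f^-)$ are preserved by each step.

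The main obstacle is arranging joint smoothness across the mass-shell boundary $q = 0$, where $q = \sqrt{(k^0)^2 - \underline{k}^2 - \mu^2}$ has merely $C^0$ dependence on $(k^0,\underline{k})$. This is precisely what forces the interpolant $F^\pm$ to be taken even in $q$, so that the whole construction factors through the polynomial $q^2$. The remaining Schwartz bookkeeping, in particular across the Seeley extension and the quadratic substitution, is routine.
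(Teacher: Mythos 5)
Your construction is correct, and it turns on the same key point as the paper's proof: the interpolant must be supported in pairwise disjoint neighbourhoods of the mass shells and must depend on the transversal momentum only through $q^2=(k^0)^2-\underline{k}^2-\mu^2$, so that the square root never enters and smoothness across the shells is automatic. The difference lies entirely in the implementation. The paper simply writes
\[
 \hat f'(\omega,k)=\sum_m\sum_{s=\pm}\theta(s\omega)\,\chi\bigl(a(\omega^2-\omega_{k,m}^2)\bigr)\,\hat f^s_m(sk),
\]
i.e., it chooses the bump \emph{directly} as a function of the polynomial $\omega^2-\omega_{k,m}^2$, with $a$ fixed so that the supports are disjoint and avoid $\omega=0$ (which is why the sharp $\theta(s\omega)$ is harmless; your smooth cutoff $\rho(k^0)$ plays the same role). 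Your route --- interpolating in the auxiliary variable $q$, symmetrizing, factoring through $q^2$ via Whitney's theorem on even functions, and Seeley-extending in $s$ --- reproduces exactly this structure, since $\chi_m(\sqrt{s})$ is nothing but a smooth bump in $s$ centred at $q_m^2$; but it does so at the price of two nontrivial pieces of machinery (Whitney with parameters, Seeley) whose continuity in the Schwartz topology you would still have to track to obtain the continuity assertion of the lemma, and which you currently dismiss as routine. If you replace $\phi((q-q_m)/\eps)+\phi((q+q_m)/\eps)$ by a bump $\chi(a(s-q_m^2))$ from the outset, your Steps 1 and 2 collapse into a single formula and the only estimate left is the one the paper uses, namely that $\omega_{k,m}$ grows with $m$, so that the seminorms $\norm{\hat f^\pm}_{\alpha\beta c}$ control the Schwartz seminorms of $\hat f'$. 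Your remark that the quadratic substitution is proper is correct but slightly understates what is needed: one should note that preimages of balls grow only polynomially, which is what converts rapid decay in $(s,\underline{k})$ into rapid decay in $(k^0,\underline{k})$; this is implicit in the paper's estimate as well.
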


\begin{proof}
Choose $\chi: \R \to \R$, smooth and supported in $[-1/2,1/2]$, with $\chi(0) = 1$. Then choose $a > 0$ such that $\mu^2 < 1/a$ and $q_m^2 - q_{m-1}^2 < 1/a$ for all $m \geq 1$ (this is possible by Proposition~\ref{prop:SelfAdjointStrip}) and set
\[
  \hat f'(\omega, k) = \sum_m \sum_{s \in \pm} \theta(s \omega) \chi( a (\omega^2-\omega_{k,m}^2)) \hat f^s_m(s k).
\]
Clearly, this fulfills the requirement and is a Schwarz function. Furthermore, each seminorm
\[
 \norm{\hat f'}_{\alpha \beta} = \sup_{k \in \R^d} \betrag{k^\alpha} \betrag{ \del^\beta \hat f'(k)}
\]
with multiindices $\alpha$, $\beta$, can be bounded by the seminorms
\[
 \norm{ \hat f^\pm }_{\alpha \beta c} = \sum_{m \in \N} \sup_{k \in \R^{d-1}} m^c \betrag{k^\alpha} \betrag{ \del^\beta \hat f^\pm_m(k)}
\]
of $\sS(\R^{d-1}) \otimes s(\N)$, due to the growth of $\omega_{k, m}$ with $m$, \cf Proposition~\ref{prop:SelfAdjointStrip}.
\end{proof}

\begin{proof}[Proof of Proposition~\ref{prop:Wightman}]
Causality is a consequence of the canonical equal-time commutation relations and the causal propagation, proved in the previous section. The action of the Poincar\'e group is straightforwardly defined by its action on the one-particle space $\HS_1$,
\[
 (U(a, \Lambda) f)_m(\tilde k) = e^{i a \tilde k} f_m(\Lambda \tilde k),
\]
where $\tilde k = (\omega_{k,m}, k)$ and by an abuse of notation we identified the function $f_m(k)$ on $\R^{d-1}$ with the the function $f_m(\tilde k)$ on the mass hyperboloid for mass $\sqrt{q_m^2 + \mu^2}$. This also entails the claim on the spectrum.

To prove continuity, we define continuous linear maps
\[
 \sS(M) \times \sS(\del M) \ni (f, f|) \mapsto \hat f^\pm \in \sS(\R^{d-1}) \otimes s(\N)
\]
by
\begin{align*}
 \hat f^-_m(k) & = (2 \pi)^{- \frac{1}{2}} \int \skal{(\bar f, \bar f|)(x^0)}{\Phi_{k,m}} e^{-i \omega_{k, m} x^0} \ud x^0, \\
 \hat f^+_m(k) & = (2 \pi)^{- \frac{1}{2}} \int \skal{\Phi_{k,m}}{(f, f|)(x^0)} e^{i \omega_{k, m} x^0} \ud x^0.
\end{align*}
Continuity can be shown by using the growth of $\omega_{k, m}$ with $m$, Proposition~\ref{prop:SelfAdjointStrip}. Using the previous lemma, we can map this pair further to $\hat f' \in \sS(\R^d)$. By construction, we then have
\[
 \phi(f, f|) = \imath^{-1} \circ \psi(f') \circ \imath,
\]
with $\psi$ the generalized free field \eqref{eq:psi}, and $\imath$ the canonical isomorphism of the Fock spaces. For $\psi$, self-adjointness on the invariant domain
\[
 \D = \left\{ \tilde \Omega \in \bigoplus_{j = 0}^n (\sS(\R^{d-1}) \otimes s(\N))^{\otimes_s j} \mid n \in \N \right\},
\]
with $\otimes_s j$ the $j$ symmetric tensor power was shown in \cite[Section~II.6]{Jost}. 

Also the cyclicity of the vacuum \wrt polynomials of the field $\psi$ was proven in \cite[Section~II.6]{Jost}. In order to prove the cyclicity \wrt polynomials of $\phi(0, f|)$, it thus suffices to construct a map $\sS(\R^d) \ni f' \mapsto f| \in \sS(\del M)$ such that
\[
 \phi(0, f|) = \imath^{-1} \circ \psi(f') \circ \imath.
\]
We define
\[
 {\hat{f'}}_m^\pm(k) = \hat f'( \pm \omega_{k,m}, \pm k) d_m,
\]
with $d_m$ the constants defined in Proposition~\ref{prop:SelfAdjointStrip},
and then use Lemma~\ref{lemma:mapping} to obtain $\hat f$ such that
\[
 \hat f(\pm \omega_{k,m}, \pm k) = {\hat{f'}}^\pm_m(k) \qquad \forall k \in \R^{d-1}, m \in \N.
\]
Placing $f$ at the $\del_+ M$ component of the boundary yields the desired result. For cyclicity \wrt polynomials in $\phi(f, f|_{\del M})$, given $f'$, we define
\begin{multline*}
 f(x,z) = (2 \pi)^{-\frac{1}{2}} \sum_m \int \ud \omega \ud^{d-1} k \ \theta(\omega) \phi_{k, m}(\underline x,z) \chi(\omega^2 - \omega_{k, m}^2) \\ \times \left( \hat f'(-\omega_{k, m}, -k) e^{i \omega x^0} + \hat f'(\omega_{k, m}, k) e^{-i \omega x^0} \right)
\end{multline*}
where $\chi(0) = 1$ and the support of the test function $\chi$ is taken sufficiently small, so that there is no overlap of the various mass shells. This has the desired property and using the explicit form of $\phi_{k, m}$ derived in Proposition~\ref{prop:SelfAdjointStrip}, one can show that $f \in \sS(M)$.
\end{proof}

Proposition~\ref{prop:Wightman} states that the restriction of the field to the boundary is rather naturally possible. Hence, we define, for $f \in \sS(\del M)$,
\begin{equation}
\label{eq:def_phi_bd}
 \phi|(f) = \phi(0, c^{-1} f),
\end{equation}
where the factor $c^{-1}$ is introduced to cancel the factor $c$ from the scalar product. We may also restrict to the two boundary components separately. This yields
\begin{multline}
\label{eq:phi_bd}
 \phi|_\pm(x) = (2 \pi)^{-\frac{d-1}{2}} \sum_m (\pm)^m d_m \int \frac{\ud^{d-1} k}{\sqrt{2 \omega_{k,m}}} \times \\ \times \left( e^{-i(\omega_{k,m} x^0 - k \underline{x})} a_m(k) + e^{i(\omega_{k,m} x^0 - k \underline{x})} a_m(k)^* \right),
\end{multline}
i.e., a generalized free field. The spectrum contains all the mass shells for the masses $q_m$. Its two-point function is given by
\begin{equation}
\label{eq:2pt_bdy}
 \Delta_+|_\pm(x) = \sum_m \betrag{d_m}^2 \Delta_+^{\mu_m}(x),
\end{equation}
where we made use of translation invariance, $\mu_m = \sqrt{q_m^2+\mu^2}$ and $\Delta_+^m$ is the two-point function for the free field of mass $m$ on the $d$ dimensional Minkowski space. For $d=1$, this is
\[
 \Delta_+^\mu(x^0) = \frac{1}{2 \mu} e^{- i  \mu x^0}.
\]
By Proposition~\ref{prop:SelfAdjointStrip}, the coefficients $d_m$ are square summable, which implies that the two-point function has the same degree of short-distance singularity as the vacuum two-point function in $d$ space-time dimensions. This can be formalized by the concept of the scaling degree of a distribution, \cf \cite{BrunettiFredenhagenScalingDegree}. We then have:

\begin{proposition}
\label{prop:2pt}
Let $\mu > 0$ or $d>2$. Then $\Delta_+|_\pm$ is a tempered distribution. Its singular support is contained in $\{ x \in \R^d | x^2 \leq 0 \}$ and the projection of its analytic wave front set to the cotangent space is given by $\{ k \in \R^d | k^2 \leq 0, k^0 > 0 \}$. For $d \geq 2$, the scaling degree of $\Delta_+|_\pm$ at coinciding points is $d-2$.
\end{proposition}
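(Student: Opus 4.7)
The strategy for all four claims is to propagate properties of each summand through the series $\Delta_+|_\pm = \sum_m |d_m|^2 \Delta_+^{\mu_m}$, exploiting the summability $|d_m|^2 = O(m^{-2})$ and the growth $\mu_m \gtrsim m$ supplied by Proposition~\ref{prop:SelfAdjointStrip}. Paired with $f \in \sS(\R^d)$, one has $\skal{\Delta_+^\mu}{f} = \int_{H_\mu^+} \hat f\,d\mu_\mu$, with $H_\mu^+$ the forward mass hyperboloid carrying its invariant measure. Since $|k|\geq\mu$ on $H_\mu^+$ and $\hat f$ is Schwartz, this is bounded by $C_N(f)(1+\mu)^{-N}$ for any $N$, with $C_N(f)$ a Schwartz seminorm; the weighted series therefore converges absolutely and continuously in $f$, producing a tempered distribution.

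For the scaling degree, the rescaling identity $\Delta_+^\mu(\lambda x) = \lambda^{-(d-2)}\Delta_+^{\lambda\mu}(x)$ (obtained by substituting $p\to\lambda p$ in the momentum integration) gives $\lambda^{d-2}\Delta_+|_\pm(\lambda x) = \sum_m |d_m|^2\Delta_+^{\lambda\mu_m}(x)$. For $d\geq 3$, pass $\lambda\to 0^+$ inside the sum (dominated convergence via the uniform-in-$\mu$ bound above, together with $\Delta_+^\mu\to\Delta_+^0$ in $\sS'$ as $\mu\to 0$), obtaining $(\sum_m|d_m|^2)\Delta_+^0\neq 0$; an extra $\lambda^\eps$ then kills the limit. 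Hence $\mathrm{sd}=d-2$. In $d=2$ one argues instead from the logarithmic short-distance of each $\Delta_+^{\mu_m}$ via $\lambda^\eps\log\lambda\to 0$. For the singular support, on compact $K\subset\{x^2\geq r_0^2>0\}$ the Bessel representation $\Delta_+^\mu\propto\mu^{(d-2)/2}r^{-(d-2)/2}K_{(d-2)/2}(\mu r)$ together with the asymptotics $K_\nu(x)\sim\sqrt{\pi/(2x)}e^{-x}$ (and analogous bounds for derivatives) yields $\sup_K|\partial^\alpha\Delta_+^{\mu_m}|\leq C_\alpha\mu_m^{N_\alpha}e^{-\mu_m r_0}$; this exponential decay in $m$ dominates the polynomial factor $\mu_m^{N_\alpha}$, so the series converges in $C^\infty(K)$.

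For the analytic wave front set, the containment in $\{k \in \R^d : k^2\leq 0,\,k^0\geq 0\}$ follows from the spectrum condition of Proposition~\ref{prop:Wightman}: $\Delta_+|_\pm$ is the boundary value of a function holomorphic in a forward tube, and the standard theorem on boundary values of such holomorphic functions places $\mathrm{WF}_A$ in the dual cone $\overline{V^+}$. For the reverse inclusion, pick $\chi = \phi*\bar\phi$ with $\phi$ smooth and compactly supported near $0$, so that $\chi$ is compactly supported with $\chi(0)\neq 0$ and $\hat\chi=|\hat\phi|^2\geq 0$. Then
\beq
\widehat{\chi\Delta_+|_\pm}(k) = (2\pi)^{-d/2}\sum_m|d_m|^2\int_{H_{\mu_m}^+}\hat\chi(k-k')\,d\mu_{\mu_m}(k')
\eeq
is a sum of non-negative terms, ruling out cancellation. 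For $k_0$ with $k_0^2<0$ and $k_0^0>0$, the ray $\{\lambda k_0\}$ meets $H_{\mu_m}^+$ at $\lambda_m=\mu_m/\sqrt{-k_0^2}$; at such a crossing the $m$-th term alone gives $\widehat{\chi\Delta_+|_\pm}(\lambda_m k_0)\gtrsim|d_m|^2/\lambda_m\gtrsim\lambda_m^{-3}$. This polynomial lower bound along the sequence $\lambda_m\to\infty$ rules out exponential decay on any conic neighborhood of $k_0$, placing $k_0$ in the $\mathrm{WF}_A$-projection; the lightlike $k_0$ follow by closedness.

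The main obstacle is this reverse inclusion: since the mass shells $H_{\mu_m}^+$ are discrete, naive summation could allow contributions from different $m$ to interfere destructively and obscure the required non-decay. The choice of $\chi$ with $\hat\chi\geq 0$ enforces term-by-term positivity, after which tracking the $m$-dependence of a single surviving mass-shell contribution yields the polynomial lower bound needed to preclude exponential decay.
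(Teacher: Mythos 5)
Your proof is correct and follows the same overall architecture as the paper's: decompose $\Delta_+|_\pm = \sum_m \betrag{d_m}^2 \Delta_+^{\mu_m}$, control each summand uniformly, use the Bessel-function representation and the exponential decay of $K_\nu$ for the singular support, the support of the Fourier transform for the upper bound on $\mathrm{WF}_A$, and the tower of mass shells with power-law weights for the lower bound. Where you genuinely diverge is in two sub-arguments, and in both cases your version is more detailed than the paper's. For the scaling degree, the paper invokes the small-argument asymptotics of $K_\nu$ together with square-summability of the $d_m$; you instead use the exact rescaling identity $\Delta_+^\mu(\lambda x) = \lambda^{-(d-2)}\Delta_+^{\lambda\mu}(x)$ and dominated convergence to identify the limit $\lambda^{d-2}\Delta_+|_\pm(\lambda\,\cdot) \to (\sum_m\betrag{d_m}^2)\Delta_+^0$ explicitly, which is cleaner and pins down both the upper and lower bound on the scaling degree at once (with the $d=2$ logarithmic case rightly separated). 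For the lower bound on the wave front set, the paper only asserts that the momentum-space decay is no faster than a power law; your choice of an autocorrelation cutoff with $\hat\chi = \betrag{\hat\phi}^2 \geq 0$, which makes the mass-shell contributions add without cancellation and yields the explicit bound $\widehat{\chi\Delta_+|_\pm}(\lambda_m k_0)\gtrsim\lambda_m^{-3}$, is a genuine improvement in rigor. One small caveat there: ruling out \emph{exponential} decay of $\widehat{\chi u}$ for a single smooth compactly supported $\chi$ is not by itself the right criterion for $\mathrm{WF}_A$ (smooth cutoffs already destroy analyticity); but your bound is polynomial from below, so it actually rules out \emph{rapid} decay and places $k_0$ in the projection of the $C^\infty$ wave front set, which is contained in $\mathrm{WF}_A$ --- so the conclusion stands, and you should phrase it that way. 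Also note that for $\hat\chi = \betrag{\hat\phi}^2$ one needs $\chi(x) = \int\phi(x+y)\overline{\phi(y)}\,\ud y$ rather than literally $\phi * \bar\phi$; this is cosmetic.
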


\begin{proof}
We consider its Fourier transform
\[
 \hat \Delta_+|_\pm(k) = \sum_m \betrag{d_m}^2 \hat \Delta_+^{\mu_m}(k).
\]
For the individual
\[
  \hat \Delta_+^{\mu}(k) = (2\pi)^{-1} \theta(k^0) \delta(-k^2 - \mu^2)
\]
we can derive a bound
\[
 \betrag{\skal{\hat \Delta_+^\mu}{\phi}} \leq C \norm{\phi}_{d},
\]
with the seminorm
\[
 \norm{\phi}_{d} = \sup_{k \in \R^d} (1 + \betrag{k})^{d} \betrag{ \phi(k) }
\]
of the space of Schwarz distributions.
This proceeds as follows:
\begin{align*}
 \betrag{\skal{\hat \Delta_+^\mu}{\phi}} & \leq (2\pi)^{-1} \int \frac{\ud^{d-1} k}{2 \omega_\mu(k)} \betrag{\phi(\omega_\mu(k), k)} \\
 & \leq (2\pi)^{-1} \norm{\phi}_{d} \int \frac{\ud^{d-1} k}{2 \omega_\mu(k)} \frac{1}{( 1 + \sqrt{2 k^2 + \mu^2 })^{d}}.
\end{align*}
For the latter integral, one can derive a bound which is independent of $\mu$. Temperedness then follows from the square summability of the $d_m$. For the singular support, consider any compact set $K$ space-like to the origin. There, $\Delta_+|_\pm$ is given by
\[
 \Delta_+|_\pm(x) = \sum_m \betrag{d_m}^2 \frac{1}{(2 \pi)^{-d/2}} \mu_m^{d/2-1} \betrag{x^2}^{1/2-d/4} K_{d/2-1}(\sqrt{\mu_m^2 x^2})
\]
Due to \cite[9.6.29]{AbramowitzStegun}
\[
 \del_z K_\nu(z) = - \tfrac{1}{2} \left( K_{\nu-1}(z) + K_{\nu+1}(z) \right)
\]
and the exponential decay of $K_\nu(z)$ for $z \to \infty$, all derivatives of $\Delta_+|_\pm$ can be uniformly bounded on $K$, so that the series converges to a smooth function outside of the light cone. That only positive frequency momenta are contained in the analytic wave front set follows from the support properties of $\hat \Delta_+|_\pm$. That positive frequency momenta with $k^2 < 0$ are contained follows from the fact that the decay in momentum space is not faster than any power, due to the tower of mass shells with a weight that only decays as a power law. For the scaling degree, one may use the asymptotic form of $K_\nu(z)$ for $z \to 0$ and the square-summability of the $d_m$.
\end{proof}

For time-like separations, one expects supplementary singularities coming from reflections at the other boundary.

Obviously, the representation of $\phi|_+$ on the Fock space $\sF$ coincides with the GNS representation of the Borchers-Uhlmann algebra for fields on $\R^{1, d-1}$ \wrt the two-point function \eqref{eq:2pt_bdy}. As a consequence of the above bound on the analytic wave front set and \cite[Cor.~5.5]{StrohmaierVerchWollenberg02}, the boundary field $\phi|_+$ fulfills the Reeh-Schlieder property, 
i.e., the set of states obtained by acting with the bounded operators localized in a fixed, arbitrarily small region of the boundary on the vacuum is dense in $\sF$,
\cf \cite{StrohmaierVerchWollenberg02} for details.

In analogy with \eqref{eq:def_phi_bd}, we may also define the bulk field as
\[
 \phi_\bk(f) = \phi(f,0)
\]
for $f \in \sS(M)$. Admitting more singular smearing functions $f$, we readily arrive at
\begin{equation}
\label{eq:phi_bd_phi_bk}
 \phi|_\pm(f) = \phi_\bk(f \delta(z \mp S)).
\end{equation}
Furthermore, it is straightforward to check that the boundary value of $\del_\perp \phi_\bk$ is indeed the source for the boundary field $\phi|$:
\begin{equation}
\label{eq:phi_bd_source}
 \phi|_\pm((- \Box + \mu^2) f) = \mp c^{-1} \phi_\bk(f \delta'(z \mp S)).
\end{equation}

It is straightforward to define Wick powers of the bulk and boundary fields, either by a coinciding point limit with subtraction of the correct combination of lower order Wick powers and two-point functions (dictated by Wick's theorem), or by normal ordering the creation and annihilation operators. In any case, the boundary Wick power is given by the boundary value of the bulk Wick power, i.e.,
\[
 \phi|_{\pm}^k(f) = \phi_\bk^k(f \delta(z \mp S)).
\]
\begin{remark}
Instead of defining Wick powers globally by normal ordering, it may be more appropriate to choose a local prescription, as advocated in the context of quantum field theory on curved space-times \cite{HollandsWaldReview}. However, problems may then occur at the boundary, \cf \cite{DappiaggiNosariPinamonti14}, for example, i.e., it may be necessary to restrict to bulk test functions supported in the interior of the bulk. Also the treatment of the boundary part needs to be clarified then.
\end{remark}

For each mode $m$, one may also define local fields $\phi_m|_\pm$, localized at the boundary:
\[
 \phi_m|_\pm(x) = \frac{(\pm)^m}{(2 \pi)^{\frac{d-1}{2}}} \int \frac{\ud^{d-1} k}{\sqrt{2 \omega_{k,m}}} \left( e^{-i(\omega_{k,m} x^0 - k \underline{x})} a_m(k) + e^{i(\omega_{k,m} x^0 - k \underline{x})} a_m(k)^* \right).
\]
The relation between the fields $\phi_m|_\pm$ and $\phi|_\pm$ is clearly non-local, i.e., for
\begin{equation}
\label{eq:ModeFieldRepresentation}
 \phi_m|_\pm(f_m) = \phi|_\pm(f)
\end{equation}
to hold, the test function $f$ must be de-localized. This can be made quite precise:
\begin{proposition}
\label{prop:ModeFieldRepresentation}
Let $d=1$. For \eqref{eq:ModeFieldRepresentation} to hold, $f$ can not be supported on an interval smaller than $\frac{8}{\pi} S$.
\end{proposition}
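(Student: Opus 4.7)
The plan is to translate the operator identity \eqref{eq:ModeFieldRepresentation} into a Fourier-analytic condition on $f$ and then apply the Paley--Wiener theorem together with a Jensen-style zero-counting argument.

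First I would expand both sides of \eqref{eq:ModeFieldRepresentation} in the ladder operators $a_n$, $a_n^*$. For $d=1$ there is no spatial momentum, so
\begin{align*}
 \phi_m|_\pm(f_m) &= \frac{(\pm)^m \sqrt{2\pi}}{\sqrt{2\omega_m}}\bigl(\hat f_m(\omega_m)\, a_m + \hat f_m(-\omega_m)\, a_m^*\bigr), \\
 \phi|_\pm(f) &= \sum_n \frac{(\pm)^n d_n \sqrt{2\pi}}{\sqrt{2\omega_n}}\bigl(\hat f(\omega_n)\, a_n + \hat f(-\omega_n)\, a_n^*\bigr).
\end{align*}
Matching the coefficient of each $a_n$ and $a_n^*$ and using that $d_n \neq 0$ by Proposition~\ref{prop:SelfAdjointStrip}, \eqref{eq:ModeFieldRepresentation} becomes equivalent to the zero conditions $\hat f(\pm\omega_n) = 0$ for every $n \neq m$, together with the single normalisation $\hat f_m(\pm\omega_m) = d_m \hat f(\pm\omega_m)$.

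Next, if $f$ is supported in an interval of length $L$ (which after a translation I may take to be symmetric about the origin, without changing $L$), the Paley--Wiener theorem gives the crude bound $\lvert \hat f(z)\rvert \leq C\, e^{(L/2)\lvert z\rvert}$ on all of $\C$. Jensen's formula then yields
\[
 \int_0^R \frac{n(t)}{t}\,\ud t \leq \frac{L}{2}\, R + O(1),
\]
where $n(t)$ counts the zeros of $\hat f$ in $\{\lvert z\rvert \leq t\}$. From Proposition~\ref{prop:SelfAdjointStrip}, $\omega_n = \frac{\pi(n-1)}{2S} + O(1/n)$, so the forced real zeros $\{\pm\omega_n\}_{n\neq m}$ have asymptotic linear density $4S/\pi$. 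Letting $R \to \infty$ in the Jensen estimate therefore forces $L/2 \geq 4S/\pi$, i.e.\ $L \geq 8S/\pi$.

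The step that requires most care is the passage from the asymptotic density of the $\omega_n$ to the inequality on $L$: only the large-$n$ behaviour of the eigenvalues is tightly pinned down by Proposition~\ref{prop:SelfAdjointStrip}, and a few low-lying modes (including $\omega_0 = \mu$) may deviate from the linear pattern. This is harmless, because finitely many zeros contribute only an $O(1)$ correction to $\int_0^R n(t)/t\,\ud t$, which drops out in the $R\to\infty$ limit and does not affect the lower bound obtained above.
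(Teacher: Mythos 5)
Your proof is correct and follows essentially the same route as the paper: the Paley--Wiener theorem turns the support assumption into an exponential-type bound on $\hat f$, while \eqref{eq:ModeFieldRepresentation} forces $\hat f$ to vanish on the doubly infinite sequence of frequencies of all other modes, whose asymptotic density $\tfrac{4S}{\pi}$ (from Proposition~\ref{prop:SelfAdjointStrip}) then bounds the type, and hence $L/2$, from below. The only cosmetic differences are that you derive the density-versus-type inequality directly from Jensen's formula where the paper cites \cite[Thm.~9.1.4]{Boas}, and that you place the forced zeros at $\pm\omega_n$ rather than $\pm q_n$ (which is the more careful choice for $\mu>0$ and immaterial for the asymptotic density).
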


\begin{proof}
Assume $f$ is localized in an interval of length $L$. By the Paley-Wiener theorem \cite{ReedSimonI}, its Fourier transform is an entire function which is bounded by
\[
 \betrag{\hat f(\xi)} \leq C e^{\frac{L}{2} \betrag{\Im(\xi)}}
\]
for some constant $C$, i.e., it is of order $1$ and type $\tau \leq \frac{L}{2}$ \cite{Boas}. On the other hand, in order for \eqref{eq:ModeFieldRepresentation} to hold, we must have $\hat f(\pm q_{m'}) = 0$ for all $m' \neq m$. From Proposition~\ref{prop:SelfAdjointStrip} and \cite[Thm.~9.1.4]{Boas} it follows that $\tau > \frac{4}{\pi} S$ unless $f$ vanishes.
\end{proof}

From causality, i.e., local commutativity, it is clear that the boundary field does not fulfill the time-slice axiom: All boundary observables contained in a small time-slice commute with space-like separated bulk observables. 
It follows that the time-slice axiom does not hold for time slices smaller than $2 S$, \cf \cite{HaagSchroer62} for the discussion for a generalized free field with continuous K\"allen-Lehmann weight. Whether the time-slice axiom holds for time slices larger than $2 S$ is however not clear. The previous proposition suggest that even in the case $d = 1$ the minimal time slice should be at least $\frac{8}{\pi} S$.

We now want to construct the holographic map, i.e., for a given bulk field $\phi_\bk(f)$ we want to find a test function $f'$ on\footnote{For simplicity, we restrict to the right boundary. One may of course also consider $\del_- M$ or $\del M$.} $\del_+ M$ such that
\begin{equation}
\label{eq:Holography}
 \phi_\bk(f) = \phi|_+(f').
\end{equation}
We denote
\begin{align*}
 \hat f_m^-(k) & = (2 \pi)^{- \frac{1}{2}} \int \ud x^0 \ \skal{(\bar f(x^0), 0)}{\Phi_{k, m}} e^{- i \omega_{k, m} x^0}, \\
 \hat f_m^+(k) & = (2 \pi)^{- \frac{1}{2}} \int \ud x^0 \ \skal{\Phi_{k, m}}{(f(x^0), 0)} e^{i \omega_{k, m} x^0}.
\end{align*}
By Proposition~\ref{prop:SelfAdjointStrip}, we have to find $f' \in \sS(\del_+ M)$ such that
\[
 \hat f^\pm_m(k) = d_m \hat f'(\pm \omega_{k, m}, \pm k) \qquad \forall k \in \R^{d-1}, m \in \N.
\]
As $d_m \neq 0$ and $\{ d_m^{-1} \}$ is polynomially bounded, \cf Proposition~\ref{prop:SelfAdjointStrip}, there is an $f' \in \sS(\del_+ M)$ with this property, \cf Lemma~\ref{lemma:mapping}. We have thus proven:
  
\begin{proposition}
Let $\mu^2 > 0$. Then to each $f \in \sS(M)$ there exists $f' \in \mathcal{S}(\del_+ M)$ such that \eqref{eq:Holography} with $\phi|_\pm$ defined by \eqref{eq:phi_bd} holds.
\end{proposition}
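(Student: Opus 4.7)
The plan closely follows the construction already sketched immediately before the proposition: reduce the identity \eqref{eq:Holography} to a matching condition on mode coefficients, then apply Lemma~\ref{lemma:mapping} to produce the required boundary test function. Expanding $\phi_\bk(f) = \phi(f,0)$ via \eqref{eq:phi} and $\phi|_+(f')$ via \eqref{eq:phi_bd} against the generators $a_m(k)$ and $a_m(k)^*$, the identity \eqref{eq:Holography} is equivalent to
\begin{equation*}
 \hat f^\pm_m(k) = d_m \, \hat f'(\pm \omega_{k,m}, \pm k), \qquad k \in \R^{d-1}, \ m \in \N,
\end{equation*}
with $\hat f^\pm_m$ as defined just above the proposition. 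It thus suffices to exhibit $\hat f' \in \sS(\R^d)$ realizing this system.

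The first step I would carry out is to verify that the coefficient sequence $(\hat f^\pm_m)$ attached to $f \in \sS(M)$ belongs to $\sS(\R^{d-1}) \otimes s(\N)$. Rapid decay in $k$ is inherited from $f$, using the explicit form of $\Phi_{k,m}$ from Proposition~\ref{prop:SelfAdjointStrip} together with the uniform bound $c_m = 1 + \order(m^{-2})$. Rapid decay in $m$ is obtained by repeated integration by parts in $x^0$ against $e^{\mp i \omega_{k,m} x^0}$, using the growth $\omega_{k,m} \geq q_m$ with $q_m \sim \pi m/(2S)$ established in \eqref{eq:q_bound}, together with integration by parts in $z$ against the oscillatory factor $\cos q_m z$ or $\sin q_m z$ of $\Phi_{k,m}$.

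Second, I set $\hat g^\pm_m(k) = d_m^{-1} \hat f^\pm_m(k)$. By the lower bound on $\betrag{d_m}$ in Proposition~\ref{prop:SelfAdjointStrip}, the sequence $\betrag{d_m}^{-1}$ grows at most linearly in $m$, so $(\hat g^\pm_m)$ still lies in $\sS(\R^{d-1}) \otimes s(\N)$. Lemma~\ref{lemma:mapping} then provides $\hat f' \in \sS(\R^d)$ with $\hat f'(\pm \omega_{k,m}, \pm k) = \hat g^\pm_m(k)$, so $d_m \, \hat f'(\pm \omega_{k,m}, \pm k) = \hat f^\pm_m(k)$ as required. The corresponding $f' \in \sS(\del_+ M)$, obtained by inverse Fourier transform on $\R^{1,d-1}$, proves the claim.

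The only substantive step is the first one, namely confirming that the mode coefficients of an arbitrary bulk Schwartz test function form an element of $\sS(\R^{d-1}) \otimes s(\N)$; everything else is a direct invocation of results already established. The lower bound $\betrag{d_m} \gtrsim 1/m$ is precisely what permits multiplication by $d_m^{-1}$ without leaving the Schwartz class, so the non-vanishing clause of Proposition~\ref{prop:SelfAdjointStrip} is genuinely used here.
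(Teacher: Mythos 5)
Your argument is exactly the paper's: the proposition is proved in the text immediately preceding it by reducing \eqref{eq:Holography} to the mode-matching condition $\hat f^\pm_m(k) = d_m \hat f'(\pm\omega_{k,m},\pm k)$, dividing by $d_m$ using its non-vanishing and the polynomial boundedness of $d_m^{-1}$ from Proposition~\ref{prop:SelfAdjointStrip}, and invoking Lemma~\ref{lemma:mapping}. The one step you spell out more explicitly --- that $(\hat f^\pm_m)$ lies in $\sS(\R^{d-1})\otimes s(\N)$ --- is delegated in the paper to the continuity argument in the proof of Proposition~\ref{prop:Wightman}, and your integration-by-parts justification of it is sound.
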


An interesting open question is whether $f'$ can be chosen to be compactly supported if $f$ is. By the Paley-Wiener theorem, a necessary condition for this is that $\hat f'$ can be chosen to be an entire function of exponential type. For $d=1$, this is possible due to Proposition~\ref{prop:SelfAdjointStrip} and \cite[Thm.~2]{Leontev58}.\footnote{I am grateful to Michael Bordag for helping with the translation of \cite{Leontev58}.} But it is not clear whether one can also achieve the necessary fall-off in the real direction. In any case, one can not expect this to work in higher dimensions, due to the non-analyticity of $\omega_{k, m}$ in $k$. This is in contrast to the case of holography on AdS, where a localized bulk observable can always be mapped to a localized boundary observable \cite{RehrenHolography}.

A similar mapping can be constructed for Wick powers. However, one can not expect that one obtains local Wick powers at the boundary. Instead, a function $f' \in \mathcal{S}(\del_+M^k)$ such that
\[
 \phi_\bk^k(f) = \int \ud^{d} x_1 \dots \ud^d x_k \ : \! \phi|_+(x_1) \dots \phi|_+(x_k) \! : f'(x_1, \dots, x_k)
\]
can easily be constructed along the lines discussed above.

For illustration, we can consider how a holographic image of a local observable actually looks like. We choose $d=1$, $S=1$, $c = 1$, $\mu = 0$ and
\begin{equation}
\label{eq:fTest}
 f(t, x) = \begin{cases} e^{-\frac{1}{t+1/2}} e^{-\frac{1}{1/2-t}} e^{-\frac{1}{x+1/2}} e^{-\frac{1}{1/2-x}} & x,t \in (-1/2, 1/2) \\
0 & \text{otherwise}. \end{cases}
\end{equation}
A holographic dual $f'$ to this is shown in Figure~\ref{fig:B2B}. We see a sequence of oscillations, around $t= \pm 1, \pm 3, \pm 5$. These correspond to the times when the propagation of the peak of the test function hits the boundary after $0, 1, 2$ reflections at the boundary.

\begin{figure}
\begin{center}
\includegraphics[width=0.8\textwidth]{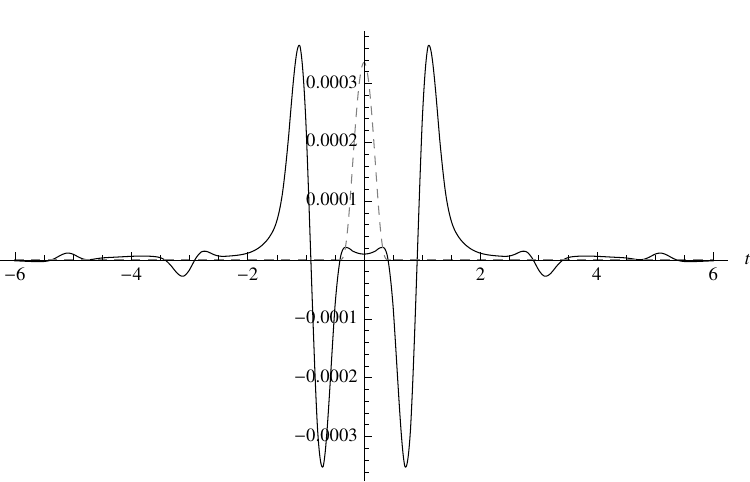}
\caption{A holographic image $f'(t)$ of \eqref{eq:fTest}. The dashed curve shows $f(t, 0)$.}
\label{fig:B2B}
\end{center}
\end{figure}

Let us comment on the situation for the half-space $\Sigma = \R_+^d$. We then have the orthonormal basis $\{ \Phi_{k,q} \}_{k \in \R^{d-1}, q \in \R_+}$, \cf Proposition~\ref{prop:SelfAdjoint}, so that the one-particle Hilbert space is given by
\[
 \HS_1 = L^2(\R^{d}_+),
\]
and the annihilation and creation operators $a(k,q)$, $a(k,q)^*$ fulfilling
\[
 [a(k,q), a(k',q')^*] = \delta(k-k') \delta(q-q').
\]
The local field is then defined as
\begin{multline*}
 \phi(F) = \int \ud x^0 \frac{\ud^{d-1} k}{\sqrt{2 \omega_{k,m}}} \ud q \left( \skal{\bar F(x^0)}{\Phi_{k,q}} e^{-i\omega_{q,m} x^0} a(k,q) \right. \\
 \left. + \skal{\Phi_{k,q}}{F(x^0)} e^{i\omega_{k,q} x^0} a(k,q)^* \right),
\end{multline*}
for $F = (f, f|)$.
Note that for $\mu = 0$ and $d=1$, we have the infrared problems that are usually present in 1+1 space-time dimension.

As for the case of the strip, the fields can be restricted to the boundary, yielding a generalized free field with the two-point function
\[
 \Delta_+ = \int_0^\infty \ud q \frac{2}{\pi ( c^2 q^2 + 1)} \Delta_+^{\sqrt{\mu^2 + q^2}}.
\]
This is again a tempered distribution with scaling degree $d-2$ for $d \geq 2$. A holographic map $\sS(M) \ni f \mapsto f'$ such that
\[
 \phi_\bk(f) = \phi|(f')
\]
can be defined as follows. Define
\begin{align*}
 \hat f^-(k, q) & = (2 \pi)^{-\frac{1}{2}} \int \ud x^0 \ \skal{(\bar f(x^0), 0)}{\Phi_{k, q}} e^{- i \omega_{k, q} x^0}, \\
 \hat f^+(k, q) & = (2 \pi)^{-\frac{1}{2}} \int \ud x^0 \ \skal{\Phi_{k, q}}{(f(x^0), 0)} e^{i \omega_{k, q} x^0},
\end{align*}
and set
\[
 \hat f'(\omega, k) = \sqrt{ \frac{\pi (c^2 q^2 + 1)}{2}}
  \begin{cases}
  \hat f^+(k, q), & \omega^2 - k^2 > \mu^2, \omega > 0, \\
  \hat f^-(-k, q), & \omega^2 - k^2 > \mu^2, \omega < 0.
 \end{cases}
\]
with $q = \sqrt{\omega^2 - k^2 - \mu^2}$. The problem is that this can in general not be smoothly continued to the region $\omega^2 - k^2 \geq \mu^2$. Hence, the resulting $f'$ will in general be a smooth $L^2$ function. In the half-space setting, the holographic map is thus more delocalizing than in the strip case.

Also thermodynamically, the boundary field in the strip-space case is less appealing, as it does not fulfill Buchholz-Wichmann nuclearity \cite{BuchholzWichmannNuclearity}. Nevertheless, for $\mu > 0$, one can straightforwardly define KMS states.

\section{Relation to other boundary conditions and the AdS/CFT correspondence}
\label{sec:Comments}

First of all, let us compare to more common boundary conditions. For concreteness, let us consider Neumann boundary conditions. Restriction of time-zero fields to the boundary is then certainly not possible. But when smeared in time, the bulk field may be restricted to the boundary, so that \eqref{eq:phi_bd_phi_bk} holds. 
However, the resulting theory has bad short distance behavior: It has the same degree of singularity as the original $d+1$ dimensional field theory, but only lives in $d$ space-time dimensions. This is due to the fact that for Neumann boundary conditions the coefficients $\{ d_m \}$ would be constant. Apart from the statement about the degree of singularity, all properties listed in Proposition~\ref{prop:2pt} also hold for the two-function of boundary fields in Neumann boundary conditions. Obviously, \eqref{eq:phi_bd_source} does not hold for Neumann boundary conditions, but the source on the \rhs would be given by $\del_\perp^2 \phi_\bk|_{\del M}$.

One could also consider Dirichlet boundary conditions, but then the restriction to the boundary would be trivial. However, one may restrict $\del_\perp \phi$ to the boundary, yielding a field theory on the boundary whose singular behavior is even worse than in the case of Neumann boundary conditions.

The Dirichlet case is quite similar to what happens in the AdS/CFT correspondence as applied to massive scalar fields \cite{WittenHolography}. There, one has two bulk propagators $G_\pm$, whose leading behavior near the boundary $z= 0$ is $z^{\Delta_\pm}$ with $\Delta_+ > \Delta_-$. Hence, one may understand $G_+$ as the analog of the Dirichlet propagator and $G_-$ as the analog of the Neumann propagator on flat space (where we would have $\Delta_+ = 1$ and $\Delta_- = 0$). The dual field can be understood as the limit $\mathcal{O}_\phi(x) = \lim_{z \to 0} z^{- \Delta_+} \phi(x, z)$ \cite{DutschRehrenDualField}, whose analog in flat space is $\del_\perp \phi$ restricted to the boundary. Its source $\phi_0$ is interpreted as the boundary value of the bulk field in the sense that $\phi \sim z^{\Delta_-} \phi_0$. It's flat space analog is thus the restriction to the boundary. Hence, in Wentzell boundary conditions, the analog the dual field is the source for the analog of the boundary value, in the sense of \eqref{eq:phi_bd_source}. However, both fields are quantized in that setting.

In holographic renormalization \cite{Skenderis02}, one introduces, in suitable coordinates, a boundary at $z = \eps$ and considers the limit $\eps \to 0$. Now the classical action, evaluated on a solution with the boundary value $\phi_0$ in the above sense, diverges as $\eps \to 0$. The cure is to introduce $\eps$ dependent counterterms localized on the boundary. For the massive scalar field, these boundary counterterms are of the type considered here, but with $\mu_\bk \neq \mu_\bd$ and $c < 0$. As discussed in Section~\ref{sec:WaveEquation}, negative values of $c$ lead to severe difficulties already at the classical level. But as holographic renormalization is a formal mathematical trick, it is not clear whether this poses real problems. In any case, working formally with this negative value of $c$, one finds that for finite $\eps$, the singularities of the boundary field are indeed of the expected form, i.e., the two-point function is singular of degree $d-2$, \cf also the discussion in Remark~\ref{rem:BoundaryBehaviorGeneral}.

\section{Conclusion}

We established well-posedness of the Cauchy problem for \eqref{eq:eom_bk}, \eqref{eq:eom_bd}, including causality. We also quantized the system and discussed some properties of the resulting quantum field. In particular, we showed that the field may be restricted to the boundary, yielding a generalized free field with the degree of singularity that one would expect for a scalar field in $d$ dimensions. Finally, we constructed an explicit holographic correspondence between bulk and boundary fields.

Our results lead to a couple of new questions. Regarding generalized Wentzell boundary conditions, a proper microlocal calculus should be set up, which would allow to prove propagation of singularities and thus give some control over the wave front sets of the relevant propagators.

One could also work out generalized Wentzell boundary conditions for other fields. In particular, it would be interesting to see whether they can be defined for Dirac fields. This may involve chiral fields on the boundaries.
For gauge fields, there naturally appears a supplementary scalar on the boundaries, the boundary value of the normal component of the bulk vector potential. Hence, such models are potentially interesting also phenomenologically.

Regarding the holographic aspect, it would be interesting to investigate whether the holographic relations still hold for interacting theories.

\subsection*{Acknowledgments}
I would like to thank Claudio Dappiaggi, Stefan Hollands, Karl-Henning Rehren, Ko Sanders, Rainer Verch, Ingo Witt, and Michal Wrochna for helpful discussions or remarks. This is a post-peer-review, pre-copyedit version of an article published in Annales Henri Poincar\'e. The final authenticated version is available online at: https://doi.org/10.1007/s00023-017-0629-3


\end{document}